










%

%
%

\documentclass[a4paper,11pt]{article}

\usepackage{authblk}

\usepackage{amsmath}
\usepackage{amsthm}
\usepackage{amssymb}

\usepackage{algorithmic}
\usepackage{algorithm}

\usepackage{graphicx}

\usepackage[left=1.0in,top=1.0in,right=1.0in,bottom=1.0in,nohead]{geometry}

\theoremstyle{definition}
\newtheorem{definition}{Definition}

\newtheorem{theorem}{Theorem}
\newtheorem{corollary}[theorem]{Corollary}
\newtheorem{lemma}[theorem]{Lemma}

\newenvironment{ap_lemma}[1]{\par\noindent{\bf Lemma~#1.} \em}{}
\newenvironment{ap_theorem}[1]{\par\noindent{\bf Theorem~#1.} \em}{}
\newenvironment{ap_corollary}[1]{\par\noindent{\bf Corollary~#1.} \em}{}

%
%

\newtheorem{problem}{Problem}

\graphicspath{{figures/}}

\def\MC#1{{\mathcal #1}}
\newcommand{\BIGLR}[3]{{\left#1#3\right#2}}
\newcommand{\BIGP}[1]{{\BIGLR{(}{)}{#1}}}
\newcommand{\BIGC}[1]{{\BIGLR{|}{|}{#1}}}

\long\def\longdelete#1{}

\usepackage{color}

\def\REV#1{{#1}}


\title{Approximating Metrics by Tree Metrics of Small Distance-Weighted Average Stretch
}


\author[1]{Mong-Jen Kao\footnote{This work was done when the author was with Karlsruhe Institute of Technology (KIT), as a visiting scholar under the NSC-DAAD-sponsored sandwich program (grant number NSC99-2911-I-002-055-2).}}
\author[1]{D.T. Lee\footnote{Also with Academia Sinica, Taiwan. The author's present address is Department of Computer Science and Engineering, National Chung-Hsing University, Taichung, Taiwan.}}
\author[2]{Dorothea Wagner}


%
%

\affil[1]{Department of Computer Science and Information Engineering, \\
National Taiwan University, Taipei, Taiwan.\\
\texttt{d97021@csie.ntu.edu.tw, dtlee@nchu.edu.tw}}
\affil[2]{Faculty of Informatics, \\
Karlsruhe Institute of Technology (KIT), Germany.\\
\texttt{dorothea.wagner@kit.edu}}


\date{}

\begin{document}

\maketitle


\begin{abstract}
\REV{We study the problem of how well a tree metric is able to preserve the sum of pairwise distances of an arbitrary metric.
}
This problem is closely related to low-stretch metric embeddings and is interesting by its own flavor from the line of research proposed in the literature.

As the structure of a tree imposes great constraints on the pairwise distances,
any embedding of a metric into a tree metric is known to have maximum pairwise stretch of $\Omega(\log n)$. 
\REV{We show, however, from the perspective of average performance, there exist tree metrics which preserve the sum of pairwise distances of the given metric up to a small constant factor, for which we also show to be no worse than twice what we can possibly expect.
}
The approach we use to tackle this problem is more direct compared to a previous result of~\cite{Abraham:2006:AME:1132516.1132557}, and also leads to a provably better guarantee.
Second, when the given metric is extracted from a Euclidean point set of finite dimension $d$, we show that there exist spanning trees of the given point set such that the sum of pairwise distances is preserved up to a constant which depends only on $d$.
Both of our proofs are constructive.
The main ingredient in our result
is a special point-set decomposition which relates two seemingly-unrelated quantities.
\end{abstract}


\section{Introduction}

The problem of approximating a given metric by a metric which is structurally simpler has been a central issue to the theory of finite metric embedding and has been studied extensively in the past decades.
A particularly simple metric of interest, which also favors from the algorithmic perspective, is a tree metric. By a tree metric we mean a metric induced by the shortest distances
between pairs of points in
a tree containing the given points.
Generally we would require the distances in the given metric not to be underestimated
in the target metric,
which is crucial for most of the applications, and we would like to bound the increase of the distances, distortion, or stretch, from above.
See~\cite{Abraham:2007:EMU:1283383.1283437,Bartal:1996:PAM:874062.875536,Bartal:2003:MRP:780542.780610,Fakcharoenphol:2003:TBA:780542.780608}.
On the other hand, a similar and equally important problem in network design is to find a tree spanning the network, represented by a graph, that provides a good approximation 
to the shortest path metric defined in the graph~\cite{Abraham:2007:EMU:1283383.1283437,Alon:1995:GGA:205998.206008,Elkin:2005:LST:1060590.1060665}.

%
%

Let $\MC{M}=(V,d)$ and $\MC{M}^\prime = (V,d^\prime)$ be two metrics over the same point set $V$ such that $d^\prime(u,v) \ge d(u,v)$ for all $u,v \in V$.
For each $u,v \in V$, let $stretch(u,v)=d^\prime(u,v)/d(u,v)$ be the pairwise stretch, or distortion, between the pair $u$ and $v$.
Different notions have been suggested to quantify how well the distances of $\MC{M}$ are preserved in $\MC{M}^\prime$, e.g.,
\begin{enumerate}
\item
Maximum pairwise stretch~\cite{Narasimhan:2007:GSN:1208237}, defined by $\max_{u,v \in V}stretch(u,v)$, which is closely related to the extensively studied \emph{Spanner} problems.

\item 
Average pairwise stretch~
\cite{Abraham:2007:EMU:1283383.1283437,Elkin:2005:LST:1060590.1060665}, defined by $\BIGP{\sum_{u,v\in V}stretch(u,v)} / \binom{|V|}{2}.$

\item
Distance-weighted average stretch~\cite{Johnson1978,Rabinovich03onaverage,Wu:1999:PAS:337729.337748},
defined as $$\frac{1}{\sum_{u,v\in V}d(u,v)}\sum_{u,v \in V}d(u,v)\cdot stretch(u,v) = \frac{\sum_{u,v \in V}d^\prime(u,v)}{\sum_{u,v\in V}d(u,v)}.$$
This measure makes sense in real-time scenarios when it is less desirable and more costly to raise the distances of distant pairs than that of close pairs.
For example, the effect of raising the delay of a pair from 2 seconds to 10 seconds is less tolerable than raising the delay of another pair from 20 ms to 100 ms.
Throughout this paper we will also refer to the sum of pairwise distances as the routing cost following the terminology used in the literature.
\end{enumerate}
%

\smallskip

In this work, we address the problem of how well a tree is able to preserve the sum of pairwise distances, or, the distance-weighted average stretch, of an underlying metric.
To be more precise, let $\MC{M} = (V,d)$ and $\MC{M}^\prime = (V^\prime,d^\prime)$ be two metrics. We say that $\MC{M}^\prime$ dominates $\MC{M}$ if $V^\prime \supseteq V$ and for all $u,v \in V$, we have $d^\prime(u,v) \ge d(u,v)$.
We consider the following two problems.


\begin{problem} \label{prob_metric}
Let $\MC{M} = (V,d)$ be a given metric and $\MC{D}(\MC{M})$ be the set of dominating tree metrics of $\MC{M}$. What is $$\inf_{(V^\prime,d^\prime) \in \MC{D}(\MC{M})}{\frac{\sum_{u,v \in V}d^\prime(u,v)}{\sum_{u,v \in V}d(u,v)}} \enskip ?$$
\end{problem}

\begin{problem} \label{prob_graph}
Let $V$ be a set of points in $\MC{R}^d$, 
\REV{
$\BIGC{\overline{u,v}}$ be the straight-line distance between two points $u,v\in V$,
}
$\MC{ST}(V)$ be the set of spanning trees of $V$, and $d_\MC{T}$ be the distance function of $\MC{T}$, for any $\MC{T} \in \MC{ST}(V)$. What is
$$\inf_{\MC{T} \in \MC{ST}(V)}{\frac{\sum_{u,v \in V}d_\MC{T}(u,v)}{\sum_{u,v \in V}\BIGC{\overline{u,v}}}} \enskip ?$$
\end{problem}


We remark on Problem~\ref{prob_graph} that, although we can consider the Euclidean metric extracted from $V$ as we did in Problem~\ref{prob_metric}, dominating tree metrics of it do not necessarily correspond to any spanning tree of $V$. In fact, if we apply the approaches for Problem~\ref{prob_metric} directly, the lack of balance guarantee in each partition can make the resulting pairwise distances arbitrary large.

\smallskip


Embedding metrics into tree metrics was introduced in the context of probabilistic embedding by Alon et al.,~\cite{Alon:1995:GGA:205998.206008}.
What follows was a series of notable work.
%
%
Bartal~\cite{Bartal:1996:PAM:874062.875536} considered probabilistic embeddings and proved that any metric can be probabilistically approximated by tree metrics with expected maximum distortion $O(\log^2n)$. This result was later improved to $O(\log n\log\log n)$~\cite{Bartal:1998:AAM:276698.276725}.
Bartal also observed that any probabilistic embedding into a tree has distortion at least $\Omega(\log n)$. This gap was closed by Fakcharoenphol et al.,~\cite{Fakcharoenphol:2003:TBA:780542.780608}, who showed that for any metric, there exists tree metrics with $O(\log n)$ distortion.

\begin{problem} \label{prob_dual}
Given a metric $M=(V,d)$ and a weight function $w: V \times V \rightarrow \MC{R}^+$, find a dominating tree metric $T$ of $M$ such that $\sum_{u,v\in V}w_{uv}\cdot d_T(u,v) \le \alpha \sum_{u,v\in V}w_{uv}\cdot d(u,v).$
\end{problem}

As Charikar et al.,~\cite{Charikar:1998:AFM:795664.796406} showed by linear program
duality that computing probabilistic embeddings of a given metric and
Problem~\ref{prob_dual} described above are in fact dual problems,
the series of work led by Bartal~\cite{Bartal:1996:PAM:874062.875536,Bartal:1998:AAM:276698.276725,Elkin:2005:LST:1060590.1060665,Fakcharoenphol:2003:TBA:780542.780608} has provided improved approximation results for a large set of problems, including \emph{buy-at-bulk network design}, \emph{vehicle routing}, \emph{metric labeling}, \emph{group Steiner tree}, \emph{Minimum cost communication network}. Refer to~\cite{Bartal:1998:AAM:276698.276725,Charikar:1998:AFM:795664.796406} for more detail and applications.


Kleinberg, Slivkins, and Wexler~\cite{Kleinberg:2009:TEU:1568318.1568322} initiated the study of partial embedding and scaling distortion, which can be regarded as embedding with relaxed guarantees. 
In a series of following work, Abraham et al.,~\cite{Abraham:2005:MER:1097112.1097448,Abraham:2006:AME:1132516.1132557} proved that any finite metric embeds probabilistically in a tree metric such that the distortion of $(1-\epsilon)$ portion of the pairs is bounded by $O(\log\frac{1}{\epsilon})$, for any $0<\epsilon<1$. They also observed a lower bound of $\Omega(\sqrt{1/\epsilon})$, which is closed by Abraham et al., in~\cite{Abraham:2007:EMU:1283383.1283437}.
In particular, Abraham et al.,~\cite{Abraham:2006:AME:1132516.1132557} 
showed that any metric can be probabilistically embedded into a tree metric such that the ratio between the expected sum of pairwise distances is $O(\log\Phi)$, where $\Phi$ is the effective aspect ratio of given distribution.
This provides an upper-bound to Problem~\ref{prob_metric} we considered. However, the guarantee they provided is loose due to the constant inherited from the guarantee on scaling distortion. 
\REV{See also~\cite{Abraham:2005:MER:1097112.1097448,abrahambartal05,Abraham:2007:EMU:1283383.1283437}.
}
%
%
Rabinovich~\cite{Rabinovich03onaverage} showed that it is possible to embed certain special graph metrics into real line such that distance-weighted average stretch is bounded by a constant.

On the other hand, for approximating arbitrary graph metrics by their spanning trees, a simple $\Omega(n)$ lower bound in terms of maximum stretch is known for $n$-cycles
\cite{Rabinovich96lowerbounds}.
Alon, Karp, Peleg, and West~\cite{Alon:1995:GGA:205998.206008} considered a distribution over spanning trees and proved an upper bound of $2^{O\BIGP{\sqrt{\log n\log\log n}}}$ on the expected distortion.
Elkin et al.,~\cite{Elkin:2005:LST:1060590.1060665} showed how a spanning tree with $O(\log^2n\log\log n)$ average stretch (over the set of edges) can be computed in polynomial time. 
In terms of average pairwise stretch, Abraham et al.,~\cite{Abraham:2007:EMU:1283383.1283437} showed the existence of a spanning tree such that, for any $0 < \epsilon < 1$, the distortion of an $(1-\epsilon)$ fraction of the pairs is bounded by $O(\sqrt{1/\epsilon})$. 
Note that this implies an $O(1)$ average pairwise stretch.
Smid~\cite{smid09} gave a simpler proof for this result when the metric is Euclidean.

In terms of sum of pairwise distances in graphs (routing cost), 
Johnson et al.,~\cite{Johnson1978} showed that computing the spanning tree of minimum routing cost is NP-hard. Polynomial time approximations as well as approximation schemes have been proposed by Wong~\cite{Wong-1980} and Wu et al.,~\cite{Wu:1999:PAS:337729.337748}.
Despite the efforts devoted, however, no general guarantees have been made on the ratio between the routing cost of the optimal spanning tree and that of the underlying graphs.
Other reasonable variations have been considered as well, i.e., \emph{sum-requirement routing trees}, \emph{product-requirement routing trees}, 
and \emph{multi-sources routing trees}~
\cite{Wu:2004:AAO:1039240.1039244,Wu:2000:PTA:346489.346493,Yewu_lightgraphs}.

\paragraph*{Our Contribution}

In this work, we take a different approach to tackle Problem~\ref{prob_metric} directly and obtain a provably small upper-bound. Specifically, we adopt the notion of \textit{hierarchically well-separated trees} (HSTs), introduced by Bartal~\cite{Bartal:1998:AAM:276698.276725} and Fakcharoenphol~\cite{Fakcharoenphol:2003:TBA:780542.780608}, 
and show that, for any given metric $\MC{M}$, there exists
a 2-HST, $\MC{M}^\prime$, such that the distance-weighted average
stretch of $\MC{M}^\prime$ is bounded by $14.24$.
The main ingredient of this result is a special point-set decomposition 
which relates two seemingly-unrelated quantities, namely, the diameter of the point set and the sum of pairwise distances between two separated subsets.
%

If we do not require HSTs, it is also possible to apply our technique and construct the so-called \textit{ultra-metrics}, which is introduced by Abraham~\cite{Abraham:2007:EMU:1283383.1283437} and Bartal~\cite{bartal:graph04}, with a similar stretch, $3.56$.
\REV{
This provides a better and explicit guarantee than that provided in~\cite{Abraham:2006:AME:1132516.1132557} (from $\ge 64$).
For the negative side, we show that there exist metrics for which no dominating tree metrics can preserve the sum of pairwise distances to a factor better than $2$.
This shows that our result is within twice the best one can achieve.
}

As a side-product, we prove the existence of spanning trees with $O(d\sqrt{d})$ distance-weighted average stretch for any point set in Euclidean space $\MC{R}^d$.
%
To this end, we use our point-set cutting lemma to decompose the points recursively. In order to guarantee a constant blow-up in the diameter of the spanning tree, however, instead of allowing arbitrary cuts, we show that it is always possible to make a balanced decomposition such that the diameters of the partitioned sets stay balanced.
\REV{
Our result provides a good guarantee when the dimension of the given Euclidean graph is low, which is true for most communication network.
Although it is possible to apply the framework of~\cite{abrahambartal05,Abraham:2007:EMU:1283383.1283437} to obtain a spanning tree of constant distance-weighted average stretch, the constant hidden inside is huge ($>10^5$) that makes it practically less useful.
}
Both of our proofs are constructive.

\section{Preliminary} \label{preliminary}

First we define some notation that will be used throughout this paper.
Let $(M,d)$ be a finite metric space,
where $M$ is the set of vertices and $d$ is the distance function.
Without loss of generality, we shall assume that the smallest distance
defined by $d$ is strictly more than $1$. 
Let $X \subseteq M$ be a subset of $M$. 
The radius of $X$ with respect to a specific element $y\in X$ is defined to be $\Delta_y(X) = \max_{z\in X}d(y,z)$.
The \emph{diameter} of $X$ is defined to be $\Delta(X) = \max_{y\in X}\Delta_y(X)$.
%
%
For any $r \ge 0$, an $r$-net decomposition of $(M,d)$ is a partition of $M$ into clusters, where each cluster, say $\MC{C}$, has radius at most $r$ with respect to a certain vertex $u \in \MC{C}$.


\begin{definition}[Hierarchical net decomposition]
Let $(M,d)$ be a metric and $\delta = \left\lceil\log_2\Delta(M)\right\rceil$.
A hierarchical net decomposition of $(M,d)$ is a sequence of $\delta+1$ nested net decompositions $D_0, D_1, \ldots, D_\delta$ such that
\begin{itemize}
	\item
		$D_\delta = \left\{M\right\}$ is the trivial partition that puts all vertices in a single cluster.
	\item
		$D_i$ is a $2^i$-net decomposition and a refinement of $D_{i+1}$.
\end{itemize}
\end{definition}

A laminar family $\MC{F} \subseteq 2^M$ of a set $M$ is a family of subsets of $M$ such
that for any $A,B \in \MC{F}$, we have either $A \subseteq B$, $B \subseteq A$, or $A \cap B = \phi$.
Clearly a hierarchical net decomposition defines a laminar family and naturally
corresponds to a rooted tree, for which is referred to as a hierarchically well-separated tree (HST), as follows. Each set $S$ in the laminar family is a node in the tree, and the children of the node corresponding to $S$ are the nodes corresponding to maximal subsets of $S$ in the family.

The distance function on this tree is defined as follows. The links from
a node $S$ in $D^i$ to each of its children in the tree have length equal to
$2^{i-1}$.
This induces a distance
function $d_T$ on $M$, where $d_T(u,v)$ is equal to the length of the shortest
path distance in $T$ from node $u$ to node $v$.

\begin{definition}[Ultrametric]
An ultrametric $M$ is a metric space $(M, d)$ whose elements are the leaves of a rooted labelled tree $T$ such that the following is met.
Each node $v \in T$ is associated with a label $\ell(v) \ge 0$ such that if $u \in T$ is a descendant of $v$ then
$\ell(u)\le \ell(v)$ and $\ell(v)=0$ if and only if $v$ is a leaf node. The distance between leaves $u, v \in M$
is defined as $d(u,v)=\ell(lca(u,v))$, where $lca(u,v)$ is the least common ancestor of $u$ and $v$ in $T$.
\end{definition}

Note that, under this definition, the metric extracted from a hierarchically well-separated tree is also an ultrametric.

\begin{definition}[Centripetal metric]
Given a metric $(M,d)$ and a vertex $x \in M$, we define the centripetal metric
$(M,d_x)$ of $(M,d)$ with respect to $x$ as $d_x(u,v) = \BIGLR{|}{|}{d(u,x) - d(v,x)}$.
\end{definition}

For any metric $(X,d)$, we denote by $\MC{R}_d(X) = \sum_{u,v \in X}d(u,v)$ the sum of pairwise distances over $X$. 
Let $P,Q \subset X$ be subsets of $X$ such that $P\cap Q = \phi$, we define $\MC{R}_d(P,Q) = \sum_{u\in P, v\in Q}d(u,v)$ to be the
sum of pairwise distances between $P$ and $Q$.
The subscripts $d$ will be omitted when there is no confusion.
Clearly, $\MC{R}(X)$ decomposes into $\MC{R}(P) + \MC{R}(Q) + \MC{R}(P,Q)$ when $P$ and $Q$ form a partition of $X$.

Consider the Euclidean space of finite dimension $d$. 
A hyper-rectangle is defined to be the Cartesian product of $d$ closed intervals,
which we will denote by $[a_1, b_1] \times [a_2, b_2] \times \ldots\times [a_d,b_d]$.
Given a hyper-rectangle $R = [a_1, b_1] \times [a_2, b_2] \times \ldots\times [a_d,b_d]$,
we denote by $\MC{L}_i(R)$ the side length of $R$ along the $i^{th}$ dimension, which is $b_i-a_i$,
and $\MC{L}_{max}(R) = \max_{1\le i\le d}\MC{L}_i(R)$.
For a point set $S \in \MC{R}^d$, we define its bounding box, denoted by $\MC{B}(S)$, to be the smallest hyper-rectangle that contains $S$.


\section{Approximating Arbitrary Metrics}

Given a metric $M=(V,d)$, we describe in this section how a 
tree metric with small constant distance-weighted average stretch
can be computed.
%

\subsection{The Algorithm} \label{subsection_arbitrary_metric_algo}

We describe an algorithm to decompose $M$ and define a hierarchical net decomposition. 
%
The algorithm runs in $\delta = \BIGLR{\lceil}{\rceil}{\log_2\Delta(V)}$ iterations.
Initially, we have $i = \delta$ and 
the trivial partition $D_\delta = \BIGLR{\{}{\}}{M}$.
In each of the following iteration, we decrease the value of $i$ by one and 
compute $D_i$ from $D_{i+1}$ as follows.

\begin{figure}[t]
\centering
\includegraphics[scale=0.9]{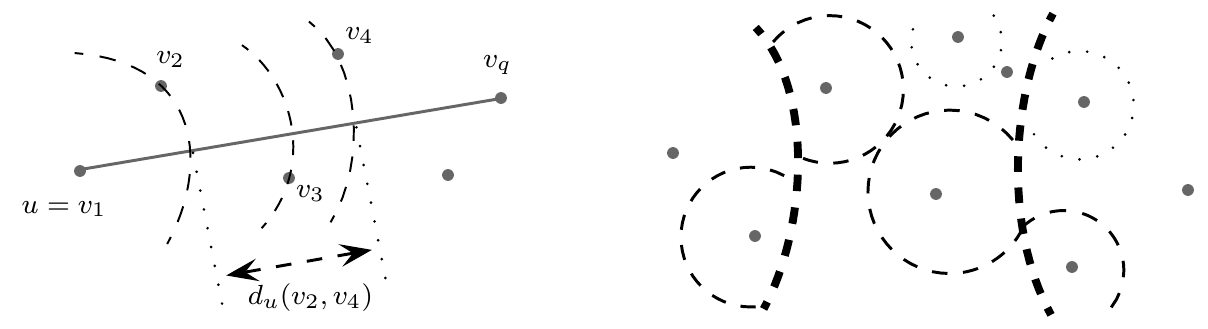}
\caption{(a) An illustration of the centripetal metric with respect to a vertex $u$. (b) A hierarchical decomposition of the points. 
}
\label{figure_centripetal}
\end{figure}

For each non-singleton cluster in $D_{i+1}$, say $\MC{P}$, 
we compute a $2^i$-cut decomposition $\MC{C}(\MC{P})$ of $\MC{P}$
%
by repeatedly decomposing $\MC{P}$ by the process described below until the diameter of each clusters in the refinement falls under $2^i$.

Let $\MC{Q}$ be a cluster in the refinement of $\MC{P}$ such that $\Delta(\MC{Q}) \ge 2^i$.
We pick a vertex $u \in \MC{Q}$ such that $\Delta_u(\MC{Q}) = \Delta(\MC{Q})$.
Then we consider the centripetal metric of $\MC{Q}$ with respect to $u$.
Let $v_1, v_2, \ldots, v_q$ be the set of vertices of $\MC{Q}$ such that
$d(u,v_1) \le d(u,v_2) \le \ldots \le d(u,v_q)$.
For $1\le i \le q-1$, we denote
$\sum_{1\le j\le i}\sum_{i<k\le q}d_u(v_j,v_k)$ by $\MC{RC(i)}$.
Literally, $\MC{RC}(i)$ corresponds to the sum of pairwise distances, or, the interaction, between 
$\{v_1, v_2, \ldots, v_i\}$ and $\{v_{i+1}, v_{i+2}, \ldots, v_q\}$.
Let $p$, $1\le p < q$, be the index such that 
$\frac{p\cdot (q-p)\cdot \Delta(\MC{Q})}{\MC{RC}(p)}$
is minimized.
%
%
%
We create a new cluster in the refinement of $\MC{P}$ containing the vertices $\{v_1, v_2, \ldots, v_p\}$ and
let $\MC{Q} \leftarrow \MC{Q} \backslash \{v_1, v_2, \ldots, v_p\}$.
This process is repeated until all the clusters in the refinement of $\MC{P}$ have diameter less than $2^i$.
$D_i$ is defined to be the union of the refinements of non-singleton clusters of $D_{i+1}$.
%
%
A high-level description of this algorithm can be found in the appendix.

\subsection{Analysis} \label{subsection_arbitrary_metric_analysis}

First we argue that the algorithm computes a dominating tree metric.
%
%
Let $T$ be the tree corresponding to the hierarchical net decomposition constructed by our algorithm and $d_T$ be the distance function induced by $T$. 
For any non-singleton cluster $\MC{P}$ in $D_i$ and $u,v\in \MC{P}$, we have $d(u,v) \le \Delta(\MC{P}) < 2^i$ by the definition of hierarchical net decomposition, and $d_T(u,v) \le 2\cdot \sum_{0\le j < i} 2^j < 2^{i+1}$ by the construction of the tree metric.
Therefore, $(T,d_T)$ is a dominating tree metric of $M$.

In the following, we will show that
$\MC{R}(T) \le 4\cdot\frac{210}{59}\cdot \MC{R}(M)$.
To this end, we prove that, for any partition of a cluster $\MC{Q}$ into, say $\MC{Q}_1$ and $\MC{Q}_2$ such that $u \in \MC{Q}_1$, we performed in our algorithm,
we have 
\begin{align}
\BIGC{\MC{Q}_1}\cdot \BIGC{\MC{Q}_2}\cdot \Delta(\MC{Q}) \le \frac{210}{59}\cdot \MC{R}(\MC{Q}_1, \MC{Q}_2). \label{ieq_rc_partition}
\end{align}
Let $T[\MC{Q}]$, $T[\MC{Q}_1]$, and $T[\MC{Q}_2]$ denote the subtree of $T$ corresponding to $\MC{Q}$, $\MC{Q}_1$, and $\MC{Q}_2$, respectively.
As a consequence to $(\ref{ieq_rc_partition})$, we have $\MC{R}(T_{\MC{Q}_1}, T_{\MC{Q}_2}) \le \BIGC{\MC{Q}_1}\cdot \BIGC{\MC{Q}_2}\cdot 2^{i+1} \le 4\cdot \BIGC{\MC{Q}_1}\cdot \BIGC{\MC{Q}_2}\cdot \Delta(\MC{Q}) \le 4\cdot\frac{210}{59}\cdot \MC{R}(\MC{Q}_1, \MC{Q}_2)$.
Since $\max\BIGLR{\{}{\}}{\BIGC{\MC{Q}_1}, \BIGC{\MC{Q}_2}} < \BIGC{\MC{Q}}$,
by an inductive argument we have $\MC{R}(T_\MC{Q}) = \MC{R}(T_{\MC{Q}_1}) + \MC{R}(T_{\MC{Q}_2}) + \MC{R}(T_{\MC{Q}_1}, T_{\MC{Q}_2}) \le 4\cdot\frac{210}{59}\cdot\BIGP{\MC{R}(\MC{Q}_1) + \MC{R}(\MC{Q}_2) + \MC{R}(\MC{Q}_1, \MC{Q}_2)} = 4\cdot\frac{210}{59}\cdot\MC{R}(\MC{Q})$. This holds for all cluster $\MC{Q}$, including the trivial cluster in $D_\delta$. Therefore $\MC{R}(T) \le 4\cdot\frac{210}{59}\cdot \MC{R}(M)$.

\medskip

It remains to prove the inequality $(\ref{ieq_rc_partition})$. Let $\{v_1, v_2, \ldots, v_q\}$ be the set of vertices of $\MC{Q}$ such that $d(u,v_1)\le d(u,v_2)\le \ldots \le d(u,v_q)$.
%
Consider the following random distribution defined over $\beta \in \BIGLR{\{}{\}}{\BIGLR{\lceil}{\rceil}{\frac{q}{4}}, \BIGLR{\lceil}{\rceil}{\frac{q}{4}}+1,\ldots,\BIGLR{\lfloor}{\rfloor}{\frac{3q}{4}}}$.

\vspace{-15pt}
\begin{align*}
Pr\BIGLR{[}{]}{\beta = i} = \frac{\MC{RC(i)}}{\sum_{\frac{q}{4}\le i \le \frac{3q}{4}}\MC{RC(i)}} 
\end{align*}

Let us first derive a lower bound on $\sum_{\frac{q}{4}\le i \le\frac{3q}{4}}\MC{RC}(i)$, which is the total amount of interaction when cutting at the central $\frac{q}{2}$ intervals.
Due to space limit, preliminary material as well as proofs to the following lemmas are moved to the appendix for further reference.

\begin{lemma} \label{lemma_central_interaction_lower_bound}
We have $$\sum_{\frac{q}{4} \le i \le \frac{3}{4}q}\MC{RC}(i)
\ge \BIGP{\frac{3}{32}q^3 + \frac{q}{2}\cdot\sum_{\frac{q}{6}q\le i \le \frac{q}{4}}i} \cdot \sum_{\frac{q}{3}\le k\le \frac{2q}{3}}\ell_k$$
\end{lemma}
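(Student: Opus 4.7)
The plan is to exploit the $1$-dimensional nature of the centripetal metric. Writing $x_j := d(u, v_j)$ for the sorted projections onto $\mathbb{R}$ and $\ell_m := x_m - x_{m-1}$ for the successive gaps, we have $d_u(v_j, v_k) = \sum_{m=j+1}^{k} \ell_m$ whenever $j<k$. Substituting this into the definition of $\MC{RC}(i)$ and interchanging the order of summation yields
\begin{equation*}
\MC{RC}(i) \;=\; \sum_{m=1}^{q} \ell_m \cdot N_m(i), \qquad N_m(i) := |\{(j,k) : j \le i < k,\; j < m \le k\}|,
\end{equation*}
and a short case analysis gives $N_m(i) = (m-1)(q-i)$ when $m \le i$ and $N_m(i) = i(q-m+1)$ when $m>i$.

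Summing over $i \in [\lceil q/4\rceil, \lfloor 3q/4\rfloor]$ and swapping the order of summation once more gives $\sum_{i} \MC{RC}(i) = \sum_{m=1}^{q} C_m \ell_m$ with $C_m := \sum_{i} N_m(i)$. Because every $\ell_m$ is non-negative, it suffices to discard the terms with $m$ outside $[q/3, 2q/3]$ and to uniformly lower-bound $C_m$ by $\frac{3}{32}q^3 + \frac{q}{2}\sum_{q/6 \le i \le q/4} i$ on this central range.

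For $m \in [\lceil q/3\rceil, \lfloor 2q/3\rfloor]$, the coefficient splits cleanly as $C_m = (m-1)\sum_{i=m}^{\lfloor 3q/4\rfloor}(q-i) + (q-m+1)\sum_{i=\lceil q/4\rceil}^{m-1} i$, which is a closed-form expression in $m$ and $q$. My plan is to evaluate this explicitly: the symmetric choice $m = q/2$ already yields the leading $\frac{3}{32}q^3$ term, and the additive correction is then extracted by examining how $C_m$ behaves at the worst $m$ in $[q/3, 2q/3]$. Specifically, the contributions coming from pairs $(j, k)$ whose indices sit near the boundary of the central interval $[q/4, 3q/4]$ turn out (after a change of variables in the inner arithmetic-series sum) to re-index exactly onto $[q/6, q/4]$, with a common multiplicative factor of $q/2$ coming from the length $\approx q/2$ of the summation range in $i$. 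Once this per-$m$ lower bound is established, the lemma follows by multiplying by $\ell_m$ and summing over $m \in [q/3, 2q/3]$.

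The main obstacle I anticipate is the discrete bookkeeping. The target form $\frac{q}{2}\sum_{q/6 \le i \le q/4} i$ is a precise integer sum rather than an asymptotic quantity, so the proof must carefully track the ceiling and floor endpoints of the two arithmetic series that define $C_m$, identify which summand corresponds to each integer $i$, and rearrange the resulting expression so that the correction telescopes into exactly the claimed form; a continuous (integral) approximation quickly reveals the leading $\frac{3}{32}q^3$ term but hides the discrete structure of the correction.
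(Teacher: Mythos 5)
Your reduction is correct and, in fact, more transparent than the paper's argument: expanding $\MC{RC}(i)$ over the gaps and swapping sums does give $\sum_{\lceil q/4\rceil\le i\le\lfloor 3q/4\rfloor}\MC{RC}(i)=\sum_m C_m\ell_m$ with exactly the case formula and the closed form for $C_m$ that you state. The genuine gap is in the final step: the uniform bound $C_m\ge\frac{3}{32}q^3+\frac{q}{2}\sum_{q/6\le i\le q/4}i$ on all of $[q/3,2q/3]$ is false, so the plan cannot be completed as written. Writing $m=\mu q$, your own closed form gives
\begin{equation*}
C_m=\frac{q^3}{2}\Bigl(\mu(1-\mu)-\tfrac{1}{16}\Bigr)+O(q^2),
\end{equation*}
which at the ends of the range ($\mu=1/3$ or $2/3$) equals $\frac{23}{288}q^3\approx 0.080\,q^3$, already below the leading target term $\frac{3}{32}q^3=\frac{27}{288}q^3$; and even at the best position $\mu=1/2$ the coefficient is exactly $\frac{3}{32}q^3$ to leading order, leaving nothing to absorb the additional $\frac{q}{2}\sum_{q/6\le i\le q/4}i=\Theta(q^3)$ term. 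Concretely, for $q=12$ the exact coefficient of the gap between $v_4$ and $v_5$ is $\sum_{i=3}^{9}\min(i,4)\,(12-\max(i,4))=156$, and even the largest central coefficient (the gap between $v_6$ and $v_7$) is $180$, whereas the target constant is $\frac{3}{32}\cdot 12^3+6\cdot(2+3)=192$. So the "worst $m$" analysis you plan to do will return a value strictly smaller than what the lemma asks for, no matter how carefully the floors and ceilings are tracked.

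It is worth noting that your reduction is essentially sharp rather than lossy: the gaps $\ell_m$ can be made to concentrate almost entirely on a single central index (e.g., $u$ together with roughly $q/3$ points at distance about $1$ from $u$ and the remaining points at distance $L\gg 1$, which is realizable as a metric with $\Delta_u=\Delta$), so a per-$m$ coefficient bound of the claimed size is not merely your strategy but is forced by the statement. Your exact bookkeeping therefore shows that no argument of this type reaches the stated constant; the best uniform constant available from the coefficients is about $\frac{23}{288}q^3$, attained near $m=q/3$. The paper proceeds differently: it splits the pairs into three groups (both endpoints in the central half, one endpoint in each outer quarter, and mixed central-to-outer pairs), bounds the first group via its Lemma~\ref{lemma_counting_overall_routing_cost} applied to the central $q/2$ points, and then credits every pair in the second and third groups to all of the roughly $q/2$ central cuts. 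Your coefficient computation does not support that crediting for the mixed pairs (a pair $(v_j,v_k)$ with $j$ near $2q/3$ is separated only by the cuts $i\ge j$), and this is exactly where the two accountings diverge; rather than trying to force your $C_m$ to meet the stated constant, you should flag this discrepancy explicitly.
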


The following lemma proves the existence of a good cut and 
$(\ref{ieq_rc_partition})$.

\begin{lemma} \label{lemma_expectation_upper_bound}
We have
{\small
$$\min\BIGLR{\{}{\}}{E\BIGLR{[}{]}{\frac{\beta\cdot(q-\beta)\cdot\Delta(\MC{Q})}{\MC{RC}(\beta)}}, \min_{1\le \gamma\le \frac{q}{3}}\BIGLR{\{}{\}}{\frac{\gamma\cdot(q-\gamma)\cdot\Delta(\MC{Q})}{\MC{RC}(\gamma)}, \frac{\gamma\cdot(q-\gamma)\cdot\Delta(\MC{Q})}{\MC{RC}(q-\gamma)}}} \le \frac{210}{59}.$$
}
\end{lemma}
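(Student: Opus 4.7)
My plan is a case analysis built on top of Lemma~\ref{lemma_central_interaction_lower_bound}: bound the expectation term using the central-interaction lemma, and when the central sum of $\ell_k$'s is small, produce a good boundary cut by a complementary argument. First, I would evaluate the expectation in closed form. Since $\Pr[\beta=i] = \MC{RC}(i)/S$ with $S = \sum_{q/4 \le i \le 3q/4} \MC{RC}(i)$, the $\MC{RC}(\beta)$ in the denominator cancels the probability weight, giving
$$E\left[\frac{\beta(q-\beta)\Delta(\MC{Q})}{\MC{RC}(\beta)}\right] \;=\; \frac{\Delta(\MC{Q})\sum_{q/4 \le i \le 3q/4} i(q-i)}{\sum_{q/4 \le i \le 3q/4} \MC{RC}(i)}.$$
Direct evaluation of the numerator yields (up to rounding) $\tfrac{11}{96}q^3$, and Lemma~\ref{lemma_central_interaction_lower_bound} supplies $\sum_{q/4 \le i \le 3q/4}\MC{RC}(i) \ge \tfrac{59}{576}q^3\cdot B$ where $B := \sum_{q/3 \le k \le 2q/3}\ell_k$ (this is where the denominator $59$ in $210/59$ originates, since $\tfrac{3}{32} + \tfrac{5}{288} \cdot\tfrac{q}{2q/2}$-type accounting collapses to $\tfrac{59}{576}$). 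Substituting gives $E \le \tfrac{66}{59}\cdot\Delta(\MC{Q})/B$, so whenever $B \ge \tfrac{11}{35}\Delta(\MC{Q})$ the expectation term alone already attains $\tfrac{210}{59}$.

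It remains to handle the case $B < \tfrac{11}{35}\Delta(\MC{Q})$, where the outer sums $A := \sum_{m<q/3}\ell_m$ and $C := \sum_{m>2q/3}\ell_m$ together exceed $\tfrac{24}{35}\Delta(\MC{Q})$. In this regime I would exhibit some $\gamma \in [1, q/3]$ (or, by the $u \leftrightarrow v_q$ symmetry, a $q-\gamma$) for which $\MC{RC}(\gamma) \ge \tfrac{59}{210}\gamma(q-\gamma)\Delta(\MC{Q})$. The decomposition
$$\MC{RC}(\gamma) \;=\; (q-\gamma)\sum_{m\le\gamma}m\,\ell_m \;+\; \gamma\sum_{m>\gamma}(q-m)\,\ell_m,$$
together with its mirror expression for $\MC{RC}(q-\gamma)$, shows that when $\ell_m$-mass concentrates near the endpoints, cuts at the matching outer indices collect a large amount of cross-interaction. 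I plan to run a weighted-averaging argument over $\gamma \in [1, q/3]$---analogous to the central-interaction estimate but aimed at the outer region---and extract such a $\gamma$ by an averaging / pigeonhole step, pushing the $A$ and $C$ mass against the weight $\gamma(q-\gamma)$.

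The main obstacle is precisely this last step: the constants must line up so that the two complementary cases meet at exactly $\tfrac{210}{59}$. The split $\tfrac{11}{35}$ versus $\tfrac{24}{35}$ is \emph{forced} by the central analysis, so the outer-region bound must achieve at least the matching threshold with the same constant. Getting this tight will likely require an averaging lemma for the boundary region whose constants mirror those of Lemma~\ref{lemma_central_interaction_lower_bound}, together with careful book-keeping of the weighted sums $\sum_{\gamma}\gamma(q-\gamma)$ over $[1, q/3]$. If the constants do not quite match, I would fall back on optimizing over the probability distribution on $\beta$ (which is stated to be uniform on $[q/4, 3q/4]$ but could in principle be reweighted to sharpen the integral $\tfrac{11}{96}$).
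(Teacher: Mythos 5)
Your first case is exactly the paper's: the $\MC{RC}(\beta)$ cancellation, the bound $\sum_{q/4\le i\le 3q/4}i(q-i)\le\frac{11}{96}q^3$, and the $\frac{59}{576}q^3\cdot B$ lower bound from Lemma~\ref{lemma_central_interaction_lower_bound} give $\frac{210}{59}$ whenever $B=\sum_{q/3\le k\le 2q/3}\ell_k\ge\frac{11}{35}\Delta(\MC{Q})$. The problem is the complementary case, which you leave as a plan rather than a proof, and the plan as stated would not go through. An averaging/pigeonhole step over $\gamma\in[1,q/3]$ with weights proportional to $\MC{RC}(\gamma)$ (or uniform weights) fails on simple configurations: take $\ell_1=\Delta(\MC{Q})$ and all other gaps negligible. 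Then $B\approx 0$, $\MC{RC}(\gamma)\approx(q-\gamma)\Delta(\MC{Q})$ for $\gamma\le q/3$, so the ratio $\gamma(q-\gamma)\Delta(\MC{Q})/\MC{RC}(\gamma)\approx\gamma$ grows linearly, and both the ratio of averages and the average of ratios over $[1,q/3]$ are of order $q$ rather than $O(1)$. Only the single cut $\gamma=1$ is good here, so any argument that spreads mass over the whole outer range cannot certify the constant; you must exhibit one specific cut.

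That specific cut is what the paper uses. Since $\sum_{i\le q/3}(\ell_i+\ell_{q-i})>\frac{24}{35}\Delta(\MC{Q})$ in this case, assume WLOG $\sum_{i\le q/3}\ell_i\ge\sum_{i\le q/3}\ell_{q-i}$, hence $\sum_{i\le q/3}\ell_i\ge\frac{12}{35}\Delta(\MC{Q})$ and, crucially, $\sum_{i\ge 2q/3}\ell_i\le\frac{1}{2}\Delta(\MC{Q})$. Take $p$ to be the smallest index with $\ell_p>0$ (so $p\le q/3$, and all of $v_1,\dots,v_p$ are at the same distance from $u$). Every pair crossing this cut whose far endpoint has index beyond $q/3$ contributes at least $\frac{12}{35}\Delta(\MC{Q})$, and every pair whose far endpoint is beyond $2q/3$ contributes at least $\Delta(\MC{Q})-\frac{1}{2}\Delta(\MC{Q})=\frac{1}{2}\Delta(\MC{Q})$; counting $p\cdot\frac{q}{3}$ pairs of each kind gives $\MC{RC}(p)\ge p\cdot\frac{q}{3}\bigl(\frac{12}{35}+\frac{1}{2}\bigr)\Delta(\MC{Q})=\frac{59}{210}\,pq\,\Delta(\MC{Q})\ge\frac{59}{210}\,p(q-p)\Delta(\MC{Q})$, which is exactly the missing bound (the mirrored case uses $\MC{RC}(q-p)$). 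Your sketch contains neither the choice of the first strictly positive gap nor the $\frac{1}{2}\Delta(\MC{Q})$ tail bound coming from the WLOG comparison of the two outer masses, and these are the two ingredients that make the constants meet at $\frac{210}{59}$; also note the distribution on $\beta$ is proportional to $\MC{RC}(i)$, not uniform, so the fallback you mention is not available as described.
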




%
%
As a side-product, we have the following lemma, which 
states the existence of good cuts for any given point set and the correctness of inequality~$(\ref{ieq_rc_partition})$.

\begin{lemma}[$1$-Dimensional Point Set Cutting Lemma] \label{lemma_1_d_cutting}
Given a set of real numbers $A = \left\{a_1, a_2, \ldots, a_n\right\}$, $a_1 \le a_2 \le \ldots \le a_n$, there exists a cutting point $z \in R$ with
$a_1 < z < a_n$ such that the following holds.
$$L_A(z) \cdot \left(n-L_A(z)\right) \cdot \Delta \le \delta_0 \cdot \sum_{1\le i\le L_A(z)}\enskip\sum_{L_A(z)<j\le n}(a_j-a_i),$$
where $L_A(z) = \BIGC{\{a\in A: a < z\}}$ is the number of elements in A that are smaller than $z$,
$\Delta = a_n-a_1$ is the diameter of $A$, and $\delta_0 \le \frac{210}{59}$ is a constant.
\end{lemma}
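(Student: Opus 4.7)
The plan is to derive this lemma as an immediate corollary of Lemma~\ref{lemma_expectation_upper_bound} by recognizing that the 1-D setting is exactly the centripetal metric setup that drives the proof of inequality~(\ref{ieq_rc_partition}). Given sorted reals $a_1 \le a_2 \le \ldots \le a_n$ with $\Delta = a_n - a_1$, I would view $A$ as a metric under $d(a_i,a_j) = |a_i - a_j|$; this is precisely the centripetal metric of $A$ with respect to $u = a_1$, because $|d(u,a_j) - d(u,a_i)| = a_j - a_i$ for $i < j$. Consequently the quantity $\sum_{i \le p}\sum_{j > p}(a_j - a_i)$ coincides exactly with $\MC{RC}(p)$ from Section~\ref{subsection_arbitrary_metric_algo}, and any cut point $z \in (a_p, a_{p+1})$ realizes $L_A(z) = p$.

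Under this correspondence, the lemma reduces to exhibiting an index $p \in \{1,\ldots,n-1\}$ with
$$p\cdot(n-p)\cdot\Delta \;\le\; \tfrac{210}{59}\cdot\MC{RC}(p),$$
since one then picks $z$ to be any real strictly between $a_p$ and $a_{p+1}$. To produce such a $p$, I apply Lemma~\ref{lemma_expectation_upper_bound} with $\MC{Q} = A$, $q = n$, and $\Delta(\MC{Q}) = \Delta$. One of three branches fires: either a deterministic witness $\gamma \in [1, n/3]$ bounds $\gamma(n-\gamma)\Delta/\MC{RC}(\gamma)$ by $210/59$, in which case set $p = \gamma$; or a symmetric witness bounds $\gamma(n-\gamma)\Delta/\MC{RC}(n-\gamma)$, in which case take $p = n-\gamma$ so that $p(n-p) = \gamma(n-\gamma)$; or the expectation of $\beta(n-\beta)\Delta/\MC{RC}(\beta)$ under the distribution on $[\lceil n/4\rceil, \lfloor 3n/4\rfloor]$ is at most $210/59$, and an averaging argument extracts a realization $\beta = p$ meeting the bound. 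Each branch returns the required index, and the choice of $z$ completes the proof.

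I do not anticipate a substantive obstacle here; the lemma is essentially a geometric repackaging of the probabilistic cut already established. The only bookkeeping is to confirm that the relevant index ranges are non-empty, which holds for $n \ge 4$, and to dispose of the small cases $n \in \{2,3\}$ by taking $p = 1$ directly (where $p(n-p)\Delta \le \MC{RC}(p)$ is verified by a one-line calculation, with constant far below $210/59$). The main conceptual point worth emphasizing in the write-up is that the reinterpretation of $A$ as a centripetal metric is lossless, so this lemma provides a self-contained 1-D cutting primitive that can be invoked in the subsequent Euclidean spanning-tree construction without re-running the probabilistic argument.
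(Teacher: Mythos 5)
Your reduction is exactly how the paper obtains this lemma: it is presented as a side-product of Lemma~\ref{lemma_expectation_upper_bound} (with Lemma~\ref{lemma_central_interaction_lower_bound}), since the centripetal-metric analysis with $a_i = d(u,v_i)$ \emph{is} the one-dimensional problem and inequality~(\ref{ieq_rc_partition}) is precisely the index form of the claim, so your argument matches the paper's. The only point to watch is that when consecutive values coincide ($a_p = a_{p+1}$) the witness index from the averaging branch need not be realizable as $L_A(z)$ for any $z$ --- a degeneracy the paper itself only treats in the proof of Lemma~\ref{lemma_1_d_weighted_cutting} --- but for distinct reals (the case of a genuine set) your proof is complete and identical in approach.
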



\subsection{Lower Bound}  \label{subsection_lower_bounds}

In the following, we derive a lower bound to Problem~\ref{prob_metric} we considered throughout this section. This is done by linking the basic structure of any optimal dominating tree metric to our point set cutting lemma, followed by deriving an upper bound to the performance of any cut.

Let $\MC{A}=\{a_1, a_2, \ldots, a_n\}$ be a set of numbers, where $a_i = i$ for all $1\le i\le n$,
and $(\MC{A},d)$ be the corresponding metric extracted from $\MC{A}$.
Let $(T,d_T)$ be an optimal ultra-metric embedding of $\MC{A}$ in terms of distance-weighted average stretch.
Without loss of generality, we can assume that $T$ is a binary tree.
Otherwise, we can always create dummy nodes to make $T$ binary without changing its sum of pairwise distances.
The following lemma characterizes the structure of $T$.

\begin{lemma} \label{lemma_ultrametric_lower_bound_partition}
Let $T_L$ and $T_R$ be the left-subtree and the right-subtree of $T$ such that $a_1 \in T_L$.
Then, there exists an integer $k$, $1\le k<n$, such that $T_L$ is an ultra-metric containing $\{a_1, a_2, \ldots, a_k\}$ and $T_R$ is an ultra-metric containing $\MC{A} \backslash \{a_1, a_2, \ldots, a_k\}$.
\end{lemma}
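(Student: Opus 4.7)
The plan is to argue by contradiction via a compression argument. I would first observe that for any dominating ultra-metric $T$ on a sorted real-valued point set, setting the label of each internal node $v$ to $\mathrm{diam}(\mathrm{leaves}(v))$ simultaneously preserves the ultra-metric monotonicity (leaf-set diameters are nondecreasing toward the root since leaf sets nest), preserves domination (any pair of leaves below $v$ has original distance at most $\mathrm{diam}(\mathrm{leaves}(v))$), and minimizes the resulting sum of pairwise distances. Grouping pairs by their least common ancestor, this optimal cost equals
\[
\sum_{\{u,v\} \subseteq \MC{A}} d_T(u,v) \;=\; \sum_{v \text{ internal}} |L(v)| \cdot |R(v)| \cdot \mathrm{diam}(\mathrm{leaves}(v)),
\]
where $L(v)$ and $R(v)$ denote the two child subtrees of $v$.

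Let $S_L, S_R \subseteq \{1, \ldots, n\}$ be the index sets of the leaves of $T_L$ and $T_R$, with $1 \in S_L$ and $k = |S_L|$. Suppose toward contradiction that $S_L \neq \{1, \ldots, k\}$, and write $S_L = \{s_1 < s_2 < \ldots < s_k\}$. I would construct an alternative dominating ultra-metric $T'$ on the \emph{same} tree structure as $T$, obtained by relabeling leaves: place $a_1, a_2, \ldots, a_k$ at the leaves of $T'_L$ in the same positional order as $S_L$ (i.e.\ replace $s_i$ by $a_i$), and place $a_{k+1}, \ldots, a_n$ at the leaves of $T'_R$ in the same positional order as $S_R$. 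With the labels reset to diameters, $T'$ remains a valid dominating ultra-metric. For any internal node $v$ inside $T_L$ whose leaves occupy positions $i_1 < \ldots < i_t$ in the sorted order of $S_L$, the diameter under $T$ is $s_{i_t} - s_{i_1}$ while under $T'$ it is $i_t - i_1$; since the $s_i$ are distinct positive integers, $s_{i_t} - s_{i_1} \ge i_t - i_1$. The analogous estimate holds inside $T_R$, as the pairwise differences in $S_R$ dominate those in $\{k+1, \ldots, n\}$. The root's own contribution $k(n-k)(n-1)$ is identical in both trees, since the full set $\MC{A}$ has diameter $n - 1$. Hence $T'$ weakly improves on $T$ at every internal node.

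The improvement becomes strict at the root of $T_L$. Since $a_1 \in T_L$ we have $s_1 = 1$, and since $S_L \neq \{1, \ldots, k\}$ some element of $\{2, \ldots, k\}$ is missing from $S_L$, forcing $s_k \ge k + 1$. Thus the old diameter at the root of $T_L$ is $s_k - 1 \ge k$, while the new diameter is $k - 1$. Multiplied by $|L|\cdot|R| \ge 1$ at this node, this yields a strict reduction in the total sum of pairwise distances, contradicting the optimality of $T$. Therefore $S_L = \{1, \ldots, k\}$ and the lemma follows. The main technical care lies in justifying the cost formula and the validity of $T'$'s labels after relabeling; both reduce to the monotonicity of leaf-set diameters along any ancestor chain.
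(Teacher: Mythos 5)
Your proposal is correct and follows essentially the same route as the paper's proof: an order-preserving relabeling of the leaves within the same tree structure, with internal labels reset to leaf-set diameters, which weakly decreases every node's label and strictly decreases the label at the root of $T_L$ (the paper phrases this via a permutation $\varphi$ and gets strictness at both subtree roots), contradicting optimality. The only addition is your explicit cost formula $\sum_v |L(v)|\cdot|R(v)|\cdot\mathrm{diam}(\mathrm{leaves}(v))$, which the paper uses implicitly.
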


Therefore, to obtain a lower bound on the distance-weighted average stretch of any dominating tree metric of $\MC{A}$, it suffices to consider the quality of the best cut we can possibly achieve on $\MC{A}$.

\begin{lemma} \label{lemma_cutting_lemma_lower_bound}
Let $\delta_0$ be a constant such that our point set cutting lemma holds, then $\delta_0 \ge 2$.
\end{lemma}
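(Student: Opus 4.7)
The plan is to exhibit an explicit family of point sets on which every cut performs poorly, forcing $\delta_0$ to be at least $2$ in the limit. The natural candidate is the arithmetic progression $A_n = \{1, 2, \ldots, n\}$, because it is as ``uniform'' as possible along the line and therefore offers no cut that happens to separate cheap pairs from expensive pairs.

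Concretely, I would carry out the following steps. First, observe that since $L_A(z)$ only depends on how many elements of $A_n$ lie to the left of $z$, it suffices to consider integer cut positions $k$ with $1 \le k \le n-1$, i.e.\ partitions into $\{1,\ldots,k\}$ and $\{k+1,\ldots,n\}$. The diameter is $\Delta = n-1$. Next, compute
\[
\sum_{i=1}^{k}\sum_{j=k+1}^{n}(a_j-a_i) \;=\; k\cdot\sum_{j=k+1}^{n}j \;-\; (n-k)\cdot\sum_{i=1}^{k}i \;=\; \frac{k(n-k)\cdot n}{2},
\]
where the algebra collapses because the inner sums are arithmetic. Substituting into the cutting inequality yields, for every admissible $k$,
\[
\frac{k\cdot(n-k)\cdot\Delta}{\sum_{i\le k}\sum_{j>k}(a_j-a_i)} \;=\; \frac{k(n-k)(n-1)}{\tfrac{1}{2}k n(n-k)} \;=\; \frac{2(n-1)}{n}.
\]
Crucially this ratio is independent of $k$, so no choice of cut beats it.

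Since the cutting lemma asserts the \emph{existence} of a cut achieving ratio at most $\delta_0$, and on $A_n$ every cut attains exactly $\tfrac{2(n-1)}{n}$, we must have $\delta_0 \ge \tfrac{2(n-1)}{n}$ for all $n$. Letting $n\to\infty$ gives $\delta_0 \ge 2$, as required.

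I do not expect any serious obstacle here: the only thing to be careful about is checking that restricting $z$ to integer-valued cuts is without loss of generality (any real $z$ not coinciding with a point produces the same $L_A(z)$ as a nearby integer cut), and verifying the closed-form for the pairwise-distance sum. The remarkable feature of the example, namely that the ratio is constant in $k$, is what makes the lower bound come out cleanly; this is essentially a discrete analogue of the fact that on a uniform distribution on an interval no midpoint-style cut is distinguished by the distance-weighted measure.
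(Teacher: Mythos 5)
Your proposal is correct and follows essentially the same route as the paper: the same arithmetic-progression instance $\{1,\ldots,n\}$, the same closed-form $\frac{1}{2}kn(n-k)$ for the separated pairwise sum, the same ratio $\frac{2(n-1)}{n}$ for every cut, and the same limit $n\to\infty$. The only (correct) addition is your explicit remark that non-integer cut positions reduce to integer ones via $L_A(z)$, which the paper handles implicitly by considering $z\in(a_k,a_{k+1}]$.
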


By Lemma~\ref{lemma_ultrametric_lower_bound_partition} and Lemma~\ref{lemma_cutting_lemma_lower_bound}, we obtain the following bound as claimed.
%

\begin{corollary} \label{cor_dominating_metric_lower_bound}
Let $\MC{M} = (V,d)$ be a given metric and $\MC{D}(\MC{M})$ be the set of dominating tree metrics of $\MC{M}$. Then $$\inf_{(V^\prime,d^\prime) \in \MC{D}(\MC{M})}{\frac{\sum_{u,v \in V}d^\prime(u,v)}{\sum_{u,v \in V}d(u,v)}} \ge 2.$$
\end{corollary}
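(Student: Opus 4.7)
The plan is to couple the structural lemma \ref{lemma_ultrametric_lower_bound_partition} with a clean combinatorial identity to obtain an \emph{exact} closed form for the infimum ratio on the explicit line metric $\MC{A}_n = \{1, 2, \ldots, n\}$ already fixed at the start of Section~\ref{subsection_lower_bounds}; the corollary then follows by letting $n\to\infty$.

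\textbf{Step 1 (structure).} Let $(T,d_T)$ be any dominating ultra-metric of $\MC{A}_n$, taken binary (as noted in the paragraph preceding Lemma~\ref{lemma_ultrametric_lower_bound_partition}). Applying that lemma recursively inside every subtree, each internal node $v$ of $T$ spans a consecutive sub-block $\{a_{i+1},\ldots,a_{i+m_v}\}$ of $\MC{A}_n$ which is split into two children of sizes $k_v$ and $m_v-k_v$.

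\textbf{Step 2 (lca decomposition).} Grouping pairs of leaves by least common ancestor,
\[
\MC{R}(T) \;=\; \sum_{v\ \text{internal}} k_v(m_v-k_v)\,\ell(v),
\qquad
\MC{R}(\MC{A}_n) \;=\; \sum_{v\ \text{internal}} k_v(m_v-k_v)\cdot\frac{m_v}{2},
\]
where the second identity uses the elementary fact that the cross-interaction between two consecutive sub-blocks of a line of total size $m$ split at $k$ equals $k(m-k)m/2$, as in the computation behind Lemma~\ref{lemma_1_d_cutting}. Since $T$ dominates $\MC{A}_n$ and the extreme points of $v$'s block differ by $m_v-1$, we have $\ell(v)\ge m_v-1$, so
\[
\MC{R}(T) \;\ge\; \sum_v k_v(m_v-k_v)(m_v-1) \;=\; 2\,\MC{R}(\MC{A}_n) \;-\; \sum_v k_v(m_v-k_v).
\]

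\textbf{Step 3 (key identity).} For \emph{every} binary tree with $n$ leaves,
\[
\sum_{v\ \text{internal}} k_v(m_v-k_v) \;=\; \binom{n}{2},
\]
because $k_v(m_v-k_v)$ counts exactly those leaf pairs whose lca is $v$, and each of the $\binom{n}{2}$ pairs has a unique lca. In particular the right-hand side of Step 2 is a single number depending only on $n$.

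\textbf{Step 4 (arithmetic).} Plugging in $\MC{R}(\MC{A}_n)=n(n^2-1)/6$ gives
\[
\frac{\MC{R}(T)}{\MC{R}(\MC{A}_n)} \;\ge\; 2 \;-\; \frac{\binom{n}{2}}{\MC{R}(\MC{A}_n)} \;=\; 2 \;-\; \frac{3}{n+1}.
\]
Since this holds uniformly over every dominating ultra-metric $T$ of $\MC{A}_n$, we conclude $\inf_T \MC{R}(T)/\MC{R}(\MC{A}_n) \ge 2 - 3/(n+1)$, and taking the supremum over $n$ yields the asserted lower bound of $2$ as $n\to\infty$.

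The step where I expect the most friction is Step 3: spotting the lca-counting identity is what makes the whole computation collapse to a clean closed form, and without it one is tempted to pursue a delicate inductive bound that has to treat balanced and caterpillar-like tree shapes separately. A secondary nuisance is that the corollary is phrased for \emph{dominating tree metrics} while Lemma~\ref{lemma_ultrametric_lower_bound_partition} concerns ultra-metrics; since $\MC{A}_n$ is itself a path-metric, a literal reading of ``tree metric'' would make the bound trivially false, so I interpret the corollary (consistently with the rest of the paper's HST/ultra-metric positive results) as pertaining to the ultra-metric class, which is the class Lemma~\ref{lemma_ultrametric_lower_bound_partition} handles.
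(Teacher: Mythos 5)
Your proof is correct, and it uses the same basic ingredients as the paper --- the integer-path example $\MC{A}_n$, the consecutive-split structure from Lemma~\ref{lemma_ultrametric_lower_bound_partition} applied down the hierarchy, and the cross-interaction formula $\sum_{i\le k<j}(j-i)=\tfrac12 k m(m-k)$ (the paper's Equation~(\ref{ob_rc})) --- but the final aggregation is genuinely different. The paper routes the bound through Lemma~\ref{lemma_cutting_lemma_lower_bound}, a per-cut statement ($\delta_0\ge 2\frac{m-1}{m}$ for a block of size $m$), and then says the corollary follows ``by induction on the size of $\MC{A}$''; as written this is loose, since small blocks only give per-node ratios near $1$ and one needs some global accounting to see that they carry negligible weight. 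Your lca-counting identity $\sum_v k_v(m_v-k_v)=\binom{n}{2}$ supplies exactly that accounting in closed form: it yields $\MC{R}(T)\ge 2\MC{R}(\MC{A}_n)-\binom{n}{2}$, hence the uniform bound $2-\frac{3}{n+1}$ over \emph{all} tree shapes at once (and this is in fact attained by taking each label equal to its block's diameter, so your bound is tight for each $n$). So your argument is a sharper and more self-contained replacement for Lemma~\ref{lemma_cutting_lemma_lower_bound} plus the glossed induction, at the price of being specific to the path (the paper's Lemma~\ref{lemma_cutting_lemma_lower_bound} is phrased as a statement about the cutting-lemma constant, which it also needs elsewhere). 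Your caveat about ultrametrics versus general dominating tree metrics is well taken, and it is a limitation of the paper's own proof as much as of yours: the path is itself a tree metric, so the corollary can only be read, as you do, for the ultrametric/HST class (or asymptotically as $n\to\infty$), which is precisely what Lemma~\ref{lemma_ultrametric_lower_bound_partition} handles.
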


\section{Approximating Euclidean Metrics by Their Spanning Trees} \label{section_euclidean_graph}

In this section, we show how a spanning tree of small constant distance-weighted average stretch
for a Euclidean graph can be computed in polynomial time.
The basic idea is to 
extend the previous point-set decomposition.
In order to guarantee
a constant blow-up in the diameter of the resulting spanning tree, we cannot allow
the cut to be made at arbitrary positions. 
Instead, we restrict each cut to be made within the central $(1-2\alpha)$ portion
along the longest side of its bounding box,
where $\alpha$ is a constant chosen to be $\frac{1}{4}$.
This guarantees a balanced partition, an exponentially decreasing size of the bounding boxes, and a constant
blow-up of the diameter of the resulting spanning tree.
This is crucial in the analysis, as we need a tight diameter in order to provide a
good upper-bound on the interaction between pairs separated by our cuts.
On the other hand, we also have to guarantee the existence of good cuts in the central $(1-2\alpha)$ portion so that the overall interaction stays bounded.

\smallskip

Given a set of points $\MC{P}$ in the Euclidean space $\MC{R}^d$ of finite dimension, our algorithm recursively computes a rooted tree $\MC{T}$ with root $r$ as follows. Let $\MC{B}(\MC{P})$ be the bounding box of $\MC{P}$, and $k$ be the index of dimension such that $\MC{L}_k(\MC{B}(\MC{P})) = \MC{L}_{max}(\MC{B}(\MC{P}))$.
We consider the projection of the points to the $k^{th}$-axis, and let $a_1, a_2, \ldots, a_n$, $a_1 \le a_2 \le \ldots \le a_n$, be the corresponding coordinates.
We apply our linear time algorithm\footnote{This algorithm is moved to \S~\ref{section_linear_time_cut} for further reference due to space limit.} to compute a decomposition for which the cut is restricted to be made inside the central $(1-2\alpha)$ portion, 
$\BIGLR{[}{]}{\alpha\cdot(a_1+a_n), (1-\alpha)\cdot(a_1+a_n)}$.
See also Fig.~\ref{figure_bounding_box_cutting}~(a).
Let $\MC{P}_1$ and $\MC{P}_2$ be the corresponding partitioned subsets of points.
We compute recursively the two rooted trees for $\MC{P}_1$ and $\MC{P}_2$, denoted by $\MC{T}_1$ with root $r_1$ and $\MC{T}_2$ with root $r_2$.
The tree $\MC{T}$ is constructed by joining $r_1$ and $r_2$, and the root of $\MC{T}$ is chosen to be $r_1$.
A high-level description of our algorithm is provided in the appendix. 

\begin{figure}[h]
\hspace{-45pt}
\includegraphics[scale=.8]{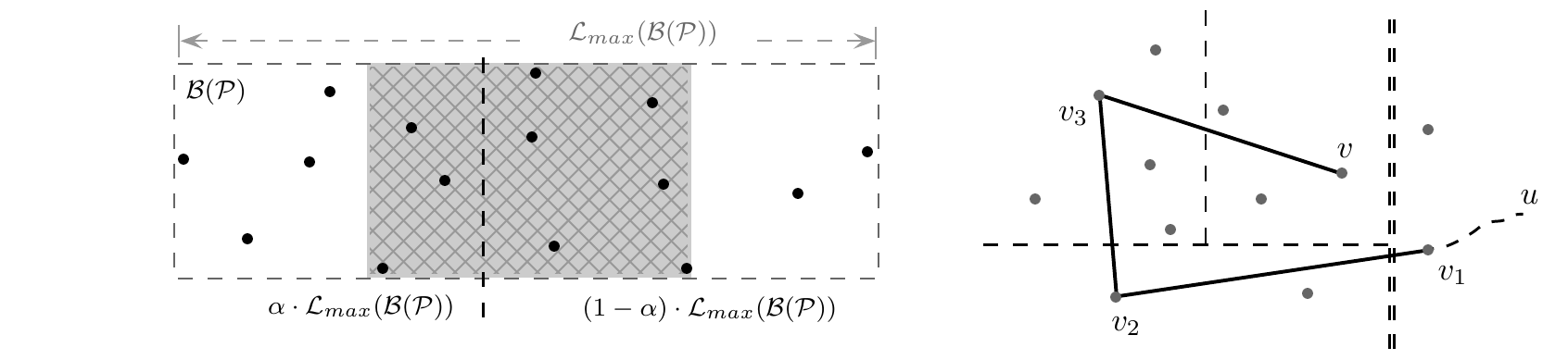}
\vspace{-20pt}
\caption{(a) The vertical cut is restricted to be placed in the central $(1-2\alpha)$ portion along the longest side of the bounding box.
(b) A possible decomposition and the $u-v$ path in the resulting tree. 
}
\label{figure_bounding_box_cutting}
\end{figure}

In the following lemma, we show that, in exchange of certain penalty in the performance factor that is  inverse proportional to the length of the interval to which the cut is restricted, we can always guarantee a good and balanced decomposition.

\begin{lemma}[Constrained Point Set Cutting Lemma] \label{lemma_1_d_weighted_cutting}
Given a set of real numbers $A = \left\{a_1, a_2, \ldots, a_n\right\}$, $a_1\le a_2\le \ldots \le a_n$
and an interval $\MC{I} = [\ell,r]$ such that $\MC{I} \subseteq [a_1,a_n]$, there exists a cutting point $z \in \MC{I}$ such that the following holds.
$$L_A(z) \cdot \BIGP{n-L_A(z)} \cdot \BIGC{\MC{I}} \le \delta_0\cdot \sum_{1\le i\le L_A(z)}\enskip \sum_{L_A(z)<j\le n}(a_j-a_i),$$
where $L_A(z) = \BIGC{\{a\in A: a < z\}}$ is the number of elements in A that are smaller than $z$
and $\delta_0 \le \frac{210}{59}$ is a constant.
\end{lemma}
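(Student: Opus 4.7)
The plan is to derive this constrained version from the unconstrained Lemma~\ref{lemma_1_d_cutting} by a simple ``clipping'' reduction on the coordinates, so that the same constant $\delta_0 \le 210/59$ transfers without loss.

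First I would define a clipped multiset $A' = \{a'_1, a'_2, \ldots, a'_n\}$, sorted in the same order as $A$, by setting $a'_i = \max\{\ell, \min\{r, a_i\}\}$ for each $i$. Since $\MC{I} \subseteq [a_1, a_n]$, we have $a_1 \le \ell$ and $a_n \ge r$, so $a'_1 = \ell$ and $a'_n = r$; in particular, the diameter of $A'$ is exactly $\BIGC{\MC{I}}$. Applying Lemma~\ref{lemma_1_d_cutting} to $A'$ then yields a cutting point $z^\star$ with $\ell = a'_1 < z^\star < a'_n = r$, hence $z^\star \in \MC{I}$, such that
\begin{equation*}
L_{A'}(z^\star) \cdot \BIGP{n - L_{A'}(z^\star)} \cdot \BIGC{\MC{I}} \;\le\; \delta_0 \cdot \sum_{1 \le i \le L_{A'}(z^\star)} \,\, \sum_{L_{A'}(z^\star) < j \le n} (a'_j - a'_i).
\end{equation*}

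Next I would transfer this inequality back to $A$ via two monotonicity properties of the clipping map, each proved by a short case analysis on whether each $a_i$ lies to the left of, inside, or to the right of $\MC{I}$. The first is $L_A(z^\star) = L_{A'}(z^\star)$: for $z^\star \in (\ell, r)$, any point with $a_i \le \ell$ satisfies both $a_i < z^\star$ and $a'_i = \ell < z^\star$; any point with $a_i \ge r$ satisfies neither inequality; and any point inside $\MC{I}$ is unchanged by clipping. The second is that $a'_j - a'_i \le a_j - a_i$ for every $i < j$: clipping can only move each coordinate toward $\MC{I}$, so pairwise gaps can only shrink. Plugging these two facts into the inequality above yields exactly the claimed bound for the original set $A$ with the cut $z = z^\star \in \MC{I}$.

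There is no substantial obstacle beyond verifying the two monotonicity statements, both of which are routine. Conceptually, the reduction captures exactly what the lemma asserts: when cuts outside $\MC{I}$ are forbidden, the diameter $a_n - a_1$ appearing on the left-hand side of Lemma~\ref{lemma_1_d_cutting} is replaced by $\BIGC{\MC{I}}$, with no other change; no loss in $\delta_0$ is incurred because the unconstrained lemma is being applied to an instance whose diameter is already exactly $\BIGC{\MC{I}}$.
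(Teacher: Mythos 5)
Your reduction is essentially the paper's own proof: the paper likewise clips the points to $\MC{I}$ (its set $B$ is your $A'$), applies Lemma~\ref{lemma_1_d_cutting} to the clipped set, and transfers the bound back using exactly your two facts, $L_A(z)=L_B(z)$ and $b_j-b_i\le a_j-a_i$. The only difference is that the paper proves explicitly that cuts placed inside the blocks of equal values created at $\ell$ and $r$ by the clipping are dominated by a cut just beyond those blocks, thereby guaranteeing a cut strictly inside $(\ell,r)$, whereas you read this off the clause $a_1<z<a_n$ in the statement of Lemma~\ref{lemma_1_d_cutting} -- acceptable if that lemma is taken at face value for inputs with repeated values.
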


In the following, we state the theorem and leave the rest detail in the appendix for further reference.

\begin{theorem} \label{thm_euclidean_tree}
Given a set of points $\MC{P}$ in $\MC{R}^d$, 
we can compute in polynomial time a spanning tree $\MC{T}$ of $\MC{P}$ such that the distance-weighted
average stretch of $\MC{T}$ with respect to $\MC{P}$ is at most $16\delta_0\cdot d\sqrt{d}$, where $\delta_0 \le \frac{210}{59}$ is the constant in our point set cutting lemma.
\end{theorem}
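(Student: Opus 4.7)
The plan is to prove by strong induction on $|\MC{Q}|$ that for every cluster $\MC{Q}$ produced in the recursion, $\MC{R}(\MC{T}_\MC{Q}) \le 16\,\delta_0\, d\sqrt{d}\cdot \MC{R}(\MC{Q})$, where $\MC{T}_\MC{Q}$ is the subtree of $\MC{T}$ spanning $\MC{Q}$ (taken with the tree metric) and $\MC{R}(\MC{Q})$ is the Euclidean routing cost. That $\MC{T}$ is a spanning tree of $\MC{P}$ follows from the construction, as every input point remains a vertex and each recursive call contributes exactly one joining edge $\overline{r_1 r_2}$. Decomposing $\MC{R}(\MC{T}_\MC{Q}) = \MC{R}(\MC{T}_{\MC{Q}_1}) + \MC{R}(\MC{T}_{\MC{Q}_2}) + \MC{R}(\MC{T}_{\MC{Q}_1}, \MC{T}_{\MC{Q}_2})$ and handling the first two terms by the inductive hypothesis, the step reduces to bounding the cross term by $16\,\delta_0\, d\sqrt{d}\cdot \MC{R}(\MC{Q}_1, \MC{Q}_2)$.

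The crux, and the main obstacle, is a geometric bound on the diameter of each recursive subtree:
$$\operatorname{diam}(\MC{T}_\MC{Q}) \;\le\; \frac{2d\sqrt{d}}{\alpha}\cdot \MC{L}_{max}(\MC{B}(\MC{Q})),$$
which with $\alpha = 1/4$ becomes $8d\sqrt{d}\cdot \MC{L}_{max}(\MC{B}(\MC{Q}))$. To derive it, I bound the eccentricity of the root $r$ of $\MC{T}_\MC{Q}$ by tracing the path to an arbitrary vertex $v$: along the sequence $\MC{Q} = \MC{Q}^{(0)} \supset \MC{Q}^{(1)} \supset \cdots$ of recursive subproblems that contain $v$, the path picks up the joining edge of $\MC{Q}^{(j)}$, of length at most $\sqrt{d}\cdot L_j$ with $L_j := \MC{L}_{max}(\MC{B}(\MC{Q}^{(j)}))$, whenever $v$ lies in the second child. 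The combinatorial heart is the claim that along any such chain $L_{j+d} \le (1-\alpha)L_j$. Fixing $j$ and letting $\mathcal{B} := \{i : \MC{L}_i(\MC{B}(\MC{Q}^{(j)})) > (1-\alpha) L_j\}$, every subsequent level $l$ at which $L_l > (1-\alpha)L_j$ still has its longest axis in $\mathcal{B}$; because the cut is confined to the central $(1-2\alpha)$ portion of that axis, both children shrink the cut dimension to at most $(1-\alpha)L_j$, expelling it from $\mathcal{B}$. Since $|\mathcal{B}|\le d$ and $\mathcal{B}$ is non-increasing along the chain, it empties within $d$ levels, yielding $L_{j+d} \le (1-\alpha) L_j$. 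Telescoping gives $\sum_{j\ge 0} L_j \le d L_0 / \alpha$, so the root-eccentricity is at most $d\sqrt{d}\,L_0/\alpha$ and the diameter is at most twice that.

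Granting the diameter bound, the cross term is at most $|\MC{Q}_1|\cdot|\MC{Q}_2|\cdot \operatorname{diam}(\MC{T}_\MC{Q}) \le 8d\sqrt{d}\cdot |\MC{Q}_1|\cdot|\MC{Q}_2|\cdot L_\MC{Q}$, where $L_\MC{Q} := \MC{L}_{max}(\MC{B}(\MC{Q}))$. Applying Lemma~\ref{lemma_1_d_weighted_cutting} to the projection of $\MC{Q}$ onto its longest axis, with $\MC{I}$ the central $(1-2\alpha)$-portion of that axis so that $|\MC{I}| = (1-2\alpha)L_\MC{Q} = L_\MC{Q}/2$, yields a cut with
$$|\MC{Q}_1|\cdot|\MC{Q}_2|\cdot \tfrac{1}{2} L_\MC{Q} \;\le\; \delta_0 \sum_{u\in \MC{Q}_1,\,v\in \MC{Q}_2}|a(v)-a(u)| \;\le\; \delta_0 \cdot \MC{R}(\MC{Q}_1, \MC{Q}_2),$$
since projection onto a single axis never increases a Euclidean distance. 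Multiplying the two inequalities yields $\MC{R}(\MC{T}_{\MC{Q}_1}, \MC{T}_{\MC{Q}_2}) \le 16\,\delta_0\, d\sqrt{d}\cdot \MC{R}(\MC{Q}_1,\MC{Q}_2)$, closing the induction. Polynomial running time follows from the linear-time implementation of Lemma~\ref{lemma_1_d_weighted_cutting}, invoked $O(n)$ times in the recursion.
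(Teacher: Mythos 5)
Your proof is correct and follows essentially the same route as the paper: bound every inter-point tree distance by $\frac{2}{\alpha}d\sqrt{d}\cdot\MC{L}_{max}(\MC{B}(\MC{Q}))$ via the $\sqrt{d}\,\MC{L}_{max}$ edge-length bound and the shrinkage $L_{j+d}\le(1-\alpha)L_j$ (the paper's Lemma~\ref{lemma_euclidean_radius_bound}), combine it with the constrained cutting lemma (Lemma~\ref{lemma_1_d_weighted_cutting}) applied to the projection on the longest axis, and close by induction, yielding $\frac{2\delta_0}{\alpha(1-2\alpha)}d\sqrt{d}=16\delta_0 d\sqrt{d}$ at $\alpha=\tfrac14$. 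A minor plus: your active-dimension argument supplies a careful justification of $L_{j+d}\le(1-\alpha)L_j$, which the paper only asserts.
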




\section{Discussion and Open Problems}  \label{section_conclusion}

We conclude with some remarks and conjectures. 
\REV{
In this work, we provided both an upper bound and a lower bound to Problem~\ref{prob_metric}.
We conjecture the lower bound of two we provided to be tight.
}
%
On the other hand, we also conjecture that similar result holds for approximating arbitrary graph metrics by their spanning trees.
However, as it seems not promising to guarantee the quality of the best cut for arbitrarily small restricted intervals, none of known graph decomposition techniques helps and either more powerful decomposition schemes or new techniques are expected.


\bibliographystyle{plain}

\small
\bibliography{routing_cost_spanning_tree}

\newpage

\begin{appendix}

\section{Approximating Arbitrary Metrics} \label{apx_sec_metric}


\subsection{The Algorithm}

\begin{figure*}[h]
\vspace{-15pt}
\rule{\linewidth}{0.2mm}
\smallskip
{{\sc Algorithm} {\em Hierarchical-Net-Decomposition$(V,d)$}}

\begin{algorithmic}[1]
\STATE $D_\delta \leftarrow \{V\}$, $i \leftarrow \delta -1$.
\WHILE{$i\ge 0$ and $D_{i+1}$ has non-singleton clusters}
	\FORALL{non-singleton cluster $\MC{P}$ in $D_{i+1}$}
		\STATE $\MC{C}(\MC{P}) \leftarrow \{\phi\}$, \enskip $\MC{S} \leftarrow \BIGLR{\{}{\}}{\MC{P}}$.
		\WHILE{$\MC{S} \neq \phi$}
			\STATE Let $\MC{Q}$ be an arbitrary cluster in $\MC{S}$.
			\IF{$\Delta(\MC{Q}) < 2^i$}
				\STATE Add $\MC{Q}$ to $\MC{C}(\MC{P})$ and remove $\MC{Q}$ from $\MC{S}$. 
			\ELSE
				\STATE Let $u\in \MC{Q}$ be a vertex such that $\Delta_u(\MC{Q}) = \Delta(\MC{Q})$.
				\STATE Let $v_1, v_2, \ldots, v_q$ be the set of vertices in $\MC{Q}$ such that $d(u,v_1) \le d(u,v_2) \le \ldots \le d(u,v_q)$.
				\STATE Let $p$, $1\le p<q$, be the index such that $\frac{p\cdot (q-p)\cdot \Delta(\MC{Q})}{\MC{RC}(p)}$ is minimized.
				\STATE Let $\MC{Q}^\prime \leftarrow \BIGLR{\{}{\}}{v_1, v_2, \ldots, v_p}$, $\MC{S} \leftarrow \MC{S} \cup \{\MC{Q}^\prime\}$, and $\MC{Q} \leftarrow \MC{Q} \backslash \MC{Q}^\prime$.
			\ENDIF
		\ENDWHILE
		\STATE Let $\MC{C}(\MC{P})$ be the refinement clusters of $\MC{P}$ in $D_i$.
	\ENDFOR
	\STATE $i \leftarrow i-1$.
\ENDWHILE
\STATE Return the tree metric corresponding to $D_0, D_1, \ldots, D_\delta$.
\end{algorithmic}
\rule{\linewidth}{0.2mm}
\vspace{-10pt}
\caption{A high-level description of the algorithm.} 
\label{algorithm_random_partition}
\end{figure*}
\vspace{-20pt}

\subsection{Analysis}

\begin{ap_lemma}{\ref{lemma_central_interaction_lower_bound}}
We have 
$$\sum_{\frac{q}{4} \le i \le \frac{3}{4}q}\MC{RC}(i)
\ge \BIGP{\frac{3}{32}q^3 + \frac{q}{2}\cdot\sum_{\frac{q}{6}q\le i \le \frac{q}{4}}i} \cdot \sum_{\frac{q}{3}\le k\le \frac{2q}{3}}\ell_k$$
\end{ap_lemma}

Before proving Lemma~\ref{lemma_central_interaction_lower_bound}, 
let us derive a lower bound on the overall interaction
$\sum_{1\le i<q}\MC{RC}(i)$.
Recall that, 
$\MC{RC}(i) = \sum_{1\le j<i}\sum_{i<j\le q}d_u(v_j,v_k)$ and 
$d_u(v_j,v_k) = \BIGLR{|}{|}{d(u,v_j)-d(u,v_k)}$.
For convenience, we will denote by $\ell_k$ the quantity $d_u(v_k,v_{k+1})$,
which is exactly $d(u,v_{k+1}) - d(u,v_k)$, for each $1\le k < q$.

First, observe that, for each $j,k$ with $1<j < k<q$, we have
exactly $(k-j)$ duplications of the item
$d_u(v_j,v_k)$ in the summation
$\sum_{1 \le i < q}\MC{RC}(i)$, i.e., it appears exactly once 
in $\MC{RC}(i)$ for each $j\le i<k$.
Therefore, after re-arranging the items we have
$$\sum_{1 \le i<q}\MC{RC}(i) = \sum_{1\le k<q} k\cdot\sum_{1\le i \le q-k}
d_u(v_i,v_{i+k}).$$

Let $f(q) = \frac{q}{2}\sum_{1\le i\le \frac{q}{2}}d_u(v_i,v_{i+\frac{q}{2}})$ if $q$ is even and $f(q)=0$ otherwise.
Then
\begin{align*}
& \sum_{1\le k<q} k\cdot\sum_{1\le i \le q-k} d_u(v_i,v_{i+k}) \\
= & \sum_{1\le k < \frac{q}{2}}k \cdot\sum_{1\le i\le q-k} d_u(v_i,v_{i+k})
+\sum_{\frac{q}{2} < k < q}k \cdot\sum_{1\le i\le q-k} d_u(v_i,v_{i+k}) + f(q) \\
= & \sum_{1\le k < \frac{q}{2}}k \cdot\sum_{1\le i\le q-k} d_u(v_i,v_{i+k}) +
\sum_{1\le k < \frac{q}{2}} (q-k) \sum_{1\le i\le k} d_u(v_i,v_{i+q-k}) + f(q),
\end{align*}

where in the last inequality we substitute the variable $k$ by $q-k$.
%
By re-organizing and aligning the items from the above summation (see also Fig.~\ref{figure_alignment}), we have the following lemma.

\begin{figure}[t]
\centering
\includegraphics[scale=0.9]{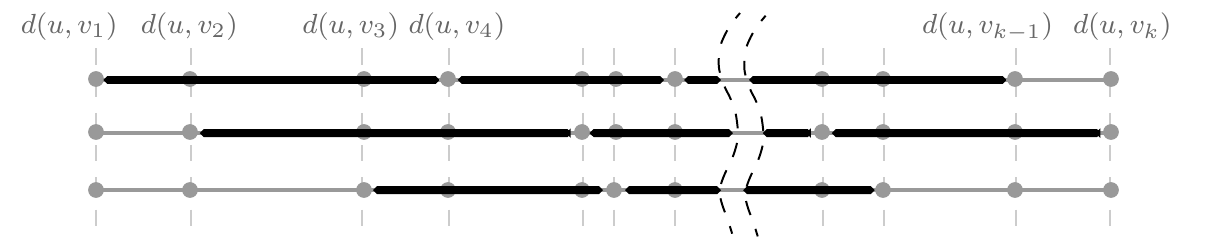}
\caption{Alignment of the intervals when $k=3$. The first group starts with $d(u,v_1)$ while the second and the third start with $d(u,v_2)$ and $d(u,v_3)$, respectively.}
\label{figure_alignment}
\end{figure}

\begin{lemma} \label{lemma_sum_fix_length_interval}
For $1\le k\le \BIGLR{\lfloor}{\rfloor}{\frac{q}{2}}$, we have
$$\sum_{1\le i\le q-k}d_u(v_i,v_{i+k})
= k\cdot\Delta(\MC{Q}) - \sum_{1\le i<k}(k-i)\cdot(\ell_i+\ell_{q-i})
= \sum_{1\le i\le k}d_u(v_i,v_{i+q-k})$$
\end{lemma}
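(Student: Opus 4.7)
The plan is to reduce both sides of the asserted identity to a common weighted sum over the ``gap variables'' $\ell_j = d(u,v_{j+1})-d(u,v_j)$, and then to compare them by a simple coverage count. Because the vertices are sorted so that $d(u,v_1)\le d(u,v_2)\le\cdots\le d(u,v_q)$, the absolute value in the definition of the centripetal metric disappears: for any $i<j$,
\[
d_u(v_i,v_j) \;=\; d(u,v_j) - d(u,v_i) \;=\; \sum_{m=i}^{j-1}\ell_m .
\]

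First I would prove the left equality. Plugging in the telescoping expression above and exchanging the order of summation gives
\[
\sum_{i=1}^{q-k} d_u(v_i,v_{i+k}) \;=\; \sum_{j=1}^{q-1} c_j\,\ell_j,
\]
where $c_j$ counts the length-$k$ windows $[i,i+k-1]$ with $1\le i\le q-k$ that contain $j$. A direct count yields $c_j = \min(j,\,k,\,q-j)$, which piecewise equals $j$ for $1\le j<k$, equals $k$ for $k\le j\le q-k$, and equals $q-j$ for $q-k<j\le q-1$. Since $u\in\MC{Q}$ was chosen to realize $\Delta(\MC{Q})$, we have $u=v_1$ and hence $\sum_{j=1}^{q-1}\ell_j = d(u,v_q)=\Delta(\MC{Q})$. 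Writing $c_j = k - \max(k-j,0) - \max(j-(q-k),0)$, splitting the sum accordingly, and substituting $i=q-j$ in the tail produces exactly $k\cdot\Delta(\MC{Q}) - \sum_{1\le i<k}(k-i)(\ell_i+\ell_{q-i})$, which is the middle expression.

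For the right equality I would run the same counting argument on the second sum. Expanding,
\[
\sum_{i=1}^{k} d_u(v_i,v_{i+q-k}) \;=\; \sum_{j=1}^{q-1} c'_j\,\ell_j,
\]
where now $c'_j = |\{i : 1\le i\le k,\; i\le j\le i+(q-k)-1\}|$. Checking the three ranges of $j$ gives $c'_j = \min(j,k,q-j) = c_j$; intuitively, windows of length $q-k$ whose left endpoint is constrained to $[1,k]$ cover each gap exactly as often as windows of length $k$ whose left endpoint is free in $[1,q-k]$. Hence both sides of the lemma equal $\sum_{j=1}^{q-1}c_j\ell_j$, and the equality follows.

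The argument is essentially a bookkeeping exercise, so there is no genuine obstacle; the only mildly delicate point is the case analysis for $c_j$ and $c'_j$ at the boundary $k=\lfloor q/2\rfloor$, where one should verify that the two tail sums in $\sum_{1\le i<k}(k-i)(\ell_i+\ell_{q-i})$ do not overlap and that the indices $q-i$ remain in the valid range $[k+1,q-1]$.
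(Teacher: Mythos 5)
Your proof is correct and follows essentially the same route as the paper: the paper's ``alignment of intervals'' argument is exactly your gap-coverage count, expressing each sum as $\sum_{j}\min(j,k,q-j)\,\ell_j$, i.e.\ $k\cdot\Delta(\MC{Q})$ minus the boundary deficits $(k-i)(\ell_i+\ell_{q-i})$. The only difference is presentational — you make the coefficient count explicit and verify the second equality by the same count, where the paper just appeals to ``a similar argument'' — and your boundary checks (that $u=v_1$, and that the two tail corrections never overlap for $k\le\lfloor q/2\rfloor$) are exactly the right details.
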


\begin{proof}[Proof of Lemma~\ref{lemma_sum_fix_length_interval}]
We prove the first half of this lemma, $\sum_{1\le i\le q-k}d_v(v_{i+k},v_i)
= k\cdot\Delta(\MC{Q}) - \sum_{1\le i<k}(k-i)\cdot(\ell_i+\ell_{q-i})$. The second half, $\sum_{1\le i\le k}d_v(v_{i+q-k},v_i)=k\cdot\Delta(\MC{Q}) - \sum_{1\le i<k}(k-i)\cdot(\ell_i+\ell_{q-i})$, follows by a similar argument.
Consider the alignments of the set of intervals which spans exactly $k$ consecutive elements, that is, 
intervals $[d(u,v_i),d(u,v_{i+k})]$, for $1\le k\le \left\lfloor\frac{q}{2}\right\rfloor$.
We have exactly $k$ alignments, each starting with $\MC{I}_i$ for $1\le i\le k$.
See also Fig.~\ref{figure_alignment}.
This sums up to $k\cdot\Delta(\MC{Q})$, except for exactly $k-i$ times over-count of $\ell_i$ and $\ell_{q-i}$.
\hfill$\Box$
\end{proof}

We provide in the following lemma an overall estimate to the overall interaction, $\sum_{1\le i<q}\MC{RC}(i)$.

\begin{lemma} \label{lemma_counting_overall_routing_cost}
$$\sum_{1 \le i < q}\MC{RC}(i)
\ge \sum_{1\le k < \frac{q}{2}} q\cdot \sum_{\frac{q}{2} -k < i < \frac{q}{2}}i\cdot(\ell_k+\ell_{q-k}) + g(q),$$ where
$g(q) = q\cdot\sum_{1\le i<\frac{q}{2}}i\cdot\ell_\frac{q}{2}$ if $q$ is even and $g(q) = 0$ otherwise.
\end{lemma}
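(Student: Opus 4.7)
The plan is to invoke Lemma~\ref{lemma_sum_fix_length_interval}, which asserts that
\[
\sum_{1\le i\le q-k}d_u(v_i,v_{i+k}) \;=\; \sum_{1\le i\le k}d_u(v_i,v_{i+q-k}) \;=\; k\Delta(\MC{Q}) - \sum_{1\le j<k}(k-j)(\ell_j+\ell_{q-j})
\]
for every $1\le k\le \lfloor q/2\rfloor$. Denote this common value by $S_k$. Substituting into the identity displayed just before the lemma statement collapses it to
\[
\sum_{1\le i<q}\MC{RC}(i) \;=\; q\sum_{1\le k<q/2}S_k \;+\; f(q),
\]
where $f(q)=(q/2)S_{q/2}$ when $q$ is even and $f(q)=0$ otherwise. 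The remaining task is to compare this with the right-hand side of the lemma.

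The key algebraic step is to re-express $S_k$ as a non-negative linear combination of the gaps $\ell_m$. Expanding $\Delta(\MC{Q})=\sum_{m=1}^{q-1}\ell_m$ in the formula for $S_k$ and tracking the coefficient of each $\ell_m$ yields the compact identity
\[
S_k \;=\; \sum_{m=1}^{q-1}\min(k,m,q-m)\,\ell_m.
\]
Swapping the order of summation in $\sum_{1\le k<q/2}S_k$ therefore places on each $\ell_m$ an explicit coefficient that depends on whether $m$ lies to the left of $q/2$, to the right of $q/2$, or (only when $q$ is even) exactly at $q/2$.

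The proof is then completed by a case split on the parity of $q$. When $q$ is odd, the arithmetic identity $\sum_{q/2-k<i<q/2}i=k(q-k)/2$ holds, and after reordering the coefficient of every $\ell_m$ on the left-hand side matches its counterpart on the right-hand side exactly, yielding equality. When $q$ is even, instead $\sum_{q/2-k<i<q/2}i=(k-1)(q-k)/2$; combining $q\sum_{1\le k<q/2}S_k$ with $f(q)$ and comparing coefficients shows that the left-hand side exceeds the right-hand side by $\tfrac{q(q-m)}{2}(\ell_m+\ell_{q-m})$ for every $1\le m<q/2$ and by $\tfrac{q^2}{4}\ell_{q/2}$ at the midpoint, all of which are non-negative.

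The main obstacle is bookkeeping rather than insight: the two parity cases must be handled separately, and one must keep the three coefficient regions (left of, equal to, and right of $q/2$) carefully distinguished while reordering the double sum. No new technique beyond the two displayed identities is required.
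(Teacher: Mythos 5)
Your proposal is correct and follows essentially the same route as the paper's proof: the rearrangement identity stated before the lemma plus Lemma~\ref{lemma_sum_fix_length_interval}, followed by expanding everything in the gap lengths $\ell_m$ and comparing coefficients. Your bookkeeping via $S_k=\sum_{m}\min(k,m,q-m)\,\ell_m$ with the parity split is, if anything, tighter than the paper's looser inequality manipulations (it exhibits equality for odd $q$ and the exact slack for even $q$), but it is not a different method.
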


\begin{proof}[Proof of Lemma~\ref{lemma_counting_overall_routing_cost}]
By the above discussion and Lemma~\ref{lemma_sum_fix_length_interval}, we have
\begin{align*}
& \sum_{1 \le i<q}\MC{RC}(i) \\
= & \sum_{1\le k < \frac{q}{2}}k \cdot\sum_{1\le i\le q-k} d_u(v_i,v_{i+k}) +
\sum_{1\le k < \frac{q}{2}} (q-k) \sum_{1\le i\le k} d_u(v_i,v_{i+p-k}) +f(q) \\
= & \sum_{1\le k \le \frac{q}{2}} q\cdot \BIGLR{(}{)}{k\cdot\Delta(\MC{Q}) - \sum_{1\le i<k}(k-i)(\ell_i+\ell_{q-i})}
\end{align*}
For $1\le i<\frac{q}{2}$, the coefficient of $\ell_i$ and $\ell_{q-i}$ in the above summation is $q\cdot\sum_{i<k<\frac{q}{2}}(k-i)$, which equals $q\cdot\sum_{1\le k<\frac{q}{2}-i}k$ by substituting the variable $k$ by $k-i$. Therefore, we have
\begin{align*}
\sum_{1\le i<q}\MC{RC}(i) \ge \sum_{1\le k < \frac{q}{2}} q\cdot k\cdot \Delta(\MC{Q}) - \sum_{1\le k < \frac{q}{2}} q\cdot \sum_{1\le i< \frac{q}{2}-k}i\cdot(\ell_k+\ell_{q-k}).
\end{align*}
%
%
%
Since $\Delta(\MC{Q}) = \sum_{1\le i<q}\ell_i$, by further expanding $\Delta(\MC{Q})$, we obtain
$$\sum_{1 \le i<q}\MC{RC}(i) \ge \sum_{1\le k < \frac{q}{2}} q\cdot \sum_{\frac{q}{2} -k < i < \frac{q}{2}}i\cdot(\ell_k+\ell_{q-k}) + g(q).$$
\hfill$\Box$
\end{proof}



Now we are ready to prove Lemma~\ref{lemma_central_interaction_lower_bound}.

\begin{proof}[Proof of Lemma~\ref{lemma_central_interaction_lower_bound}]
We divide the total interaction to be lower-bounded, $\sum_{\frac{q}{4} \le i \le \frac{3}{4}q}\MC{RC}(i)$, into three parts which we discuss below.

\begin{figure}[h]
\centering
\vspace{-10pt}
\includegraphics[scale=0.8]{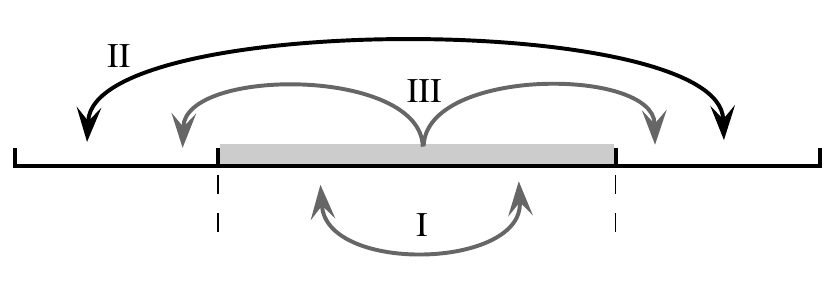}
\label{figure_central_region_lower_bound}
\vspace{-20pt}
\end{figure}

{\renewcommand\theenumi {\Roman{enumi}}
\begin{enumerate}
	\item
		the interaction between points from $\BIGLR{\{}{\}}{v_{\BIGLR{\lceil}{\rceil}{\frac{q}{4}}}, v_{\BIGLR{\lceil}{\rceil}{\frac{q}{4}}+1}, \ldots, v_{\BIGLR{\lfloor}{\rfloor}{\frac{3q}{4}}}}$.
		
		\smallskip
		
		The situation is equivalent to computing the overall interaction for a point set of $\frac{q}{2}$ points. By Lemma~\ref{lemma_counting_overall_routing_cost} with index replacement, the interaction is lower-bounded by $\sum_{1\le k < \frac{q}{4}} \frac{q}{2}\cdot \sum_{\frac{q}{4} -k < i < \frac{q}{4}}i\cdot(\ell_{\frac{q}{4}+k}+\ell_{\frac{3q}{4}-k}) + g^\prime(q),$ where
$g^\prime(q) = \frac{q}{2}\cdot\sum_{1\le i<\frac{q}{4}}i\cdot\ell_\frac{q}{2}$ if $\frac{q}{2}$ is even and $g^\prime(q) = 0$ otherwise.
		Dropping the items corresponding to $k < \frac{q}{12}$ from the first summation, we obtain $\frac{q}{2}\cdot\sum_{\frac{q}{6}q\le i \le \frac{q}{4}}i \cdot \sum_{\frac{q}{3}\le k\le \frac{2q}{3}}\ell_k$.
		
		\medskip
		
		For the remaining two cases, we consider the number of times each of the items from $\sum_{\frac{q}{3} \le k \le \frac{2q}{3}}\ell_k$ contributes to $\sum_{\frac{q}{4}\le i\le \frac{3q}{4}}\MC{RC}(i)$.
		
	\item
		the interaction between $\BIGLR{\{}{\}}{v_1, v_2, \ldots, v_{\BIGLR{\lceil}{\rceil}{\frac{q}{4}}}}$ and $\BIGLR{\{}{\}}{v_{\BIGLR{\lfloor}{\rfloor}{\frac{3q}{4}}}, v_{\BIGLR{\lfloor}{\rfloor}{\frac{3q}{4}}+1}, \ldots, v_q}$. 
		
		\smallskip
		
		For each $j,k$ such that $1\le j\le \frac{q}{4}$, $\frac{3q}{4} \le k < q$, the pair $d_u(v_j, v_k)$ contributes exactly once to the term $\MC{RC}(i)$ for each $i$ with $\frac{q}{4}\le i\le \frac{3q}{4}$. There are $\frac{1}{16}q^2$ such pairs, while there are $\frac{q}{2}$ different terms in the final summation $\sum_{\frac{q}{4} \le i \le \frac{3}{4}q}\MC{RC}(i)$. Therefore, we obtain a lower bound of $\frac{1}{32}q^3\cdot \sum_{\frac{q}{3}\le k\le \frac{2q}{3}}\ell_k$ for this part.
		
	\item
		the interaction between $\BIGLR{\{}{\}}{v_{\BIGLR{\lceil}{\rceil}{\frac{q}{4}}}, v_{\BIGLR{\lceil}{\rceil}{\frac{q}{4}}+1}, \ldots, v_{\BIGLR{\lfloor}{\rfloor}{\frac{3q}{4}}}}$ and other points.
		
		\smallskip
		
		For any specific interval $\ell_p$ with $\frac{q}{4} \le p \le \frac{3q}{4}$, we consider the number of pairs between $\BIGLR{\{}{\}}{v_{\BIGLR{\lceil}{\rceil}{\frac{q}{4}}}, v_{\BIGLR{\lceil}{\rceil}{\frac{q}{4}}+1}, \ldots, v_{\BIGLR{\lfloor}{\rfloor}{\frac{3q}{4}}}}$ and other points that contain this specific interval $\ell_p$. There are $p-\frac{q}{4}$ points, $\BIGLR{\{}{\}}{v_{\BIGLR{\lceil}{\rceil}{\frac{q}{4}}}, v_{\BIGLR{\lceil}{\rceil}{\frac{q}{4}}+1}, \ldots, v_{p}}$, which lie to the left of $v_p$ and form pairs with points from $\BIGLR{\{}{\}}{v_{\BIGLR{\lfloor}{\rfloor}{\frac{3q}{4}}}, v_{\BIGLR{\lfloor}{\rfloor}{\frac{3q}{4}}+1}, \ldots, v_q}$ that contain $\ell_p$. Similarly, the $\frac{3q}{4}-p$ points that lie to the right of $v_p$ also form pairs with points from $\BIGLR{\{}{\}}{v_1, v_2, \ldots, v_{\BIGLR{\lceil}{\rceil}{\frac{q}{4}}}}$ that contain $\ell_p$. Therefore there are $\frac{q}{4}\cdot \BIGP{p-\frac{q}{4}+\frac{3q}{4}-p} = \frac{q}{4}\cdot\frac{q}{2}$ such pairs. This is true for all $\MC{RC}(i)$ with $\frac{q}{4}\le i\le \frac{3q}{4}$. Therefore $\ell_p$ contributes $\frac{q}{4}\cdot\frac{q}{2}\cdot\frac{q}{2}$ times in the summation and we obtain a lower bound of $\frac{1}{16}q^3\cdot \sum_{\frac{q}{3}\le k\le \frac{2q}{3}}\ell_k$.
\end{enumerate}}

\noindent
Summing up the bounds we obtained in the three parts and we have this lemma.
\hfill$\Box$
\end{proof}

\begin{ap_lemma}{\ref{lemma_expectation_upper_bound}}
We have
{\small
$$\min\BIGLR{\{}{\}}{E\BIGLR{[}{]}{\frac{\beta\cdot(q-\beta)\cdot\Delta(\MC{Q})}{\MC{RC}(\beta)}}, \min_{1\le \gamma\le \frac{q}{3}}\BIGLR{\{}{\}}{\frac{\gamma\cdot(q-\gamma)\cdot\Delta(\MC{Q})}{\MC{RC}(\gamma)}, \frac{\gamma\cdot(q-\gamma)\cdot\Delta(\MC{Q})}{\MC{RC}(q-\gamma)}}} \le \frac{210}{59}.$$
}
\end{ap_lemma}

\begin{proof}[Proof of Lemma~\ref{lemma_expectation_upper_bound}]
This lemma holds trivially when $q \le 3$. For $q \ge 4$, by the definition of expected values, we have
\begin{align*}
& E\BIGLR{[}{]}{\frac{\beta\cdot(q-\beta)\cdot\Delta(\MC{Q})}{\MC{RC}(\beta)}}
= \sum_{\frac{q}{4} \le i \le \frac{3q}{4}} Pr\BIGLR{[}{]}{\beta = i} \cdot \frac{\beta\cdot(q-\beta)\cdot\Delta(\MC{Q})}{\MC{RC}(\beta)}
= \frac{\sum_{\frac{q}{4} \le i \le \frac{3q}{4}}i\cdot(q-i)\cdot\Delta(\MC{Q})}{\sum_{\frac{q}{4}\le i \le \frac{3q}{4}}\MC{RC}(i)}.
\end{align*}
First we have 
$$\sum_{\frac{q}{4} \le i \le \frac{3q}{4}}i(q-i)\cdot \Delta(\MC{Q}) = \BIGP{q\cdot\sum_{\frac{q}{4} \le i \le \frac{3q}{4}}i - \sum_{\frac{q}{4} \le i \le \frac{3q}{4}}i^2}\cdot\Delta(\MC{Q}) \le \frac{11}{96}q^3\Delta(\MC{Q}).$$
Depending on whether or not $\sum_{\frac{q}{3}\le k\le \frac{2q}{3}}\ell_i\ge \frac{11}{35}\Delta(\MC{Q})$, 
we distinguish between two cases.

\smallskip

If $\sum_{\frac{q}{3}\le k\le \frac{2q}{3}}\ell_i\ge \frac{11}{35}\Delta(\MC{Q})$,
then, by Lemma~\ref{lemma_central_interaction_lower_bound}, we have
$$\sum_{\frac{q}{4} \le i \le \frac{3q}{4}}\MC{RC}(i) \ge 
\sum_{\frac{q}{3}\le k\le \frac{2q}{3}}\ell_k \cdot \BIGP{\frac{3}{32}q^3 + \frac{q}{2} \cdot\sum_{\frac{q}{6}\le i \le \frac{q}{4}}i}
\ge \frac{11}{35}\Delta(\MC{Q}) \cdot \frac{59}{96\cdot 6} q^3,$$
$$\text{and} \quad E\BIGLR{[}{]}{\frac{\beta\cdot(q-\beta)\cdot\Delta(\MC{Q})}{\MC{RC}(\beta)}} 
\le \frac{11}{96}q^3\Delta(\MC{Q}) / \BIGP{\frac{11}{35}\Delta(\MC{Q}) \cdot \frac{59}{96\cdot 6} q^3} \le \frac{210}{59}.$$

\smallskip

On the other hand, 
if $\sum_{1 \le i \le \frac{q}{3}}(\ell_i+\ell_{q-i})\ge \frac{11}{35}\Delta(\MC{Q})$, then we have either $\sum_{1 \le i \le \frac{q}{3}}\ell_i\ge \frac{12}{35}\Delta(\MC{Q})$, or
$\sum_{1 \le i \le \frac{q}{3}}\ell_{q-i}\ge \frac{12}{35}\Delta(\MC{Q})$. 
Without loss of generality, assume that $\sum_{1 \le i \le \frac{q}{3}}\ell_i\ge \sum_{1 \le i \le \frac{q}{3}}\ell_{q-i} \ge \frac{12}{35}\Delta(\MC{Q})$.
%

\begin{figure}[h]
\centering
\includegraphics[scale=0.8]{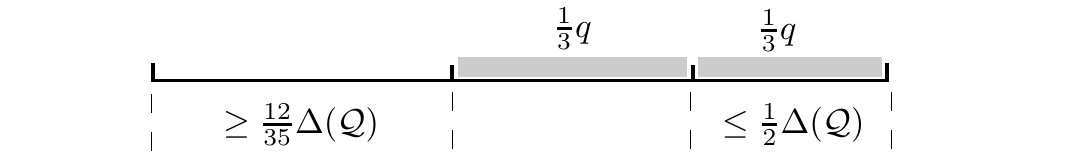}
\label{figure_boundary_lower_bound}
\end{figure}

In this case, we have $\sum_{1 \le i \le \frac{q}{3}}\ell_i + \sum_{\frac{q}{3} < i < \frac{2q}{3}}\ell_i \ge \sum_{\frac{2q}{3} \le i < q}\ell_i$. Therefore $\sum_{\frac{2q}{3} \le i < q}\ell_i \le \frac{\Delta(\MC{Q})}{2}$.
Let $p$ be the smallest integer such that $\ell_p > 0$. Counting the interaction between $\BIGLR{\{}{\}}{v_1, v_2, \ldots, v_p}$ and $\BIGLR{\{}{\}}{v_{p+1}, v_{p+2},\ldots, v_q}$, we have $\MC{RC}(p) \ge p\cdot \frac{q}{3} \cdot \frac{12}{35}\Delta(\MC{Q}) + p\cdot \frac{q}{3} \cdot \frac{1}{2}\Delta(\MC{Q})$.
Therefore,
\begin{align*}
\frac{p\cdot(q-p)\cdot\Delta(\MC{Q})}{\MC{RC}(p)} \le \frac{p\cdot q\cdot\Delta(\MC{Q})}{p\cdot q \cdot \Delta(\MC{Q}) \cdot \BIGP{\frac{1}{3} \cdot \frac{12}{35} + \frac{1}{3} \cdot \frac{1}{2}}} = \frac{210}{59}.
\end{align*}

The argument for the case $\sum_{1 \le i \le \frac{q}{3}}\ell_{q-i} \ge \sum_{1 \le i \le \frac{q}{3}}\ell_i$ is analogous. This proves the lemma.
\hfill$\Box$
\end{proof}




\subsection{Lower Bound}  \label{ap_subsection_lower_bounds}

Let $\MC{A}=\{a_1, a_2, \ldots, a_n\}$ be a set of numbers, where $a_i = i$ for all $1\le i\le n$,
and $(\MC{A},d)$ be the corresponding metric extracted from $\MC{A}$.
Let $(T,d_T)$ be an optimal ultra-metric embedding of $\MC{A}$ in terms of distance-weighted average stretch.
Without loss of generality, we can assume that $T$ is a binary tree.
Otherwise, we can always create dummy nodes to make $T$ binary without changing its sum of pairwise distances.
The following lemma characterizes the structure of $T$.

\medskip

\begin{ap_lemma}{\ref{lemma_ultrametric_lower_bound_partition}}
Let $T_L$ and $T_R$ be the left-subtree and the right-subtree of $T$ such that $a_1 \in T_L$.
Then, there exists an integer $k$, $1\le k<n$, such that $T_L$ is an ultra-metric containing $\{a_1, a_2, \ldots, a_k\}$ and $T_R$ is an ultra-metric containing $\MC{A} \backslash \{a_1, a_2, \ldots, a_k\}$.
\end{ap_lemma}

\begin{proof}[Proof of Lemma~\ref{lemma_ultrametric_lower_bound_partition}]
If not, let $\ell$ be the number of leaves in $T_L$, and denote by $\varphi$ the permutation on $\{1,2,\ldots,n\}$ such that
$T_L$ is an ultra-metric containing $\{a_{\varphi(1)}, a_{\varphi(2)}, \ldots, a_{\varphi(\ell)}\}$, where $a_{\varphi(1)} < a_{\varphi(2)} < \ldots < a_{\varphi(\ell)}$, and
$T_R$ is an ultra-metric containing $\{a_{\varphi(\ell+1)}, a_{\varphi(\ell+2)}, \ldots, a_{\varphi(n)}\}$, where $a_{\varphi(\ell+1)} < a_{\varphi(\ell+2)} < \ldots < a_{\varphi(n)}$.
Note that by our assumption, $a_{\varphi(\ell)} > a_{\varphi(\ell+1)}$.

Construct a new ultra-metric $\MC{T}_0$ as follows. 
The structure of $\MC{T}_0$ is identical to $T$. For each leaf node in $T$ that contains the singleton element, say $a_u$, we put the element $a_{\varphi^{-1}(u)}$ in the corresponding leaf node of $\MC{T}_0$.
The label of each internal node in $\MC{T}$ is set to be the diameter of the set of elements contained in the subtree rooted at it.

For each $i,j$ with $1 \le i < j \le \ell$ or $\ell < i < j\le n$,
since $i < j$ implies $a_{\varphi(i)} < a_{\varphi(j)}$ by the definition of $\varphi$,
we have $a_{\varphi(j)} - a_{\varphi(i)} \ge j - i$.
Therefore the label of each internal node in $\MC{T}_0$ is no larger than that of the corresponding internal node in $T$.
Furthermore, since $a_{\varphi(\ell)} > a_{\varphi(\ell+1)}$ by assumption, we have $a_\ell - a_1 < a_{\varphi(\ell)} - a_{\varphi(1)}$ and
$a_n - a_{\ell+1} < a_{\varphi(n)} - a_{\varphi(\ell+1)}$.
Therefore, the labels of the roots of the left-subtree and the right-subtree of $\MC{T}_0$ are strictly smaller than the labels
of their corresponding nodes in $T$.
Hence we can conclude that $\MC{R}({\MC{T}}) < \MC{R}(T)$, which is a contradiction to the optimality of $T$.
\hfill$\Box$
\end{proof}

\begin{ap_lemma}{\ref{lemma_cutting_lemma_lower_bound}}
Let $\delta_0$ be a constant such that our point set cutting lemma holds, then $\delta_0 \ge 2$.
\end{ap_lemma}

\begin{proof}[Proof of Lemma~\ref{lemma_cutting_lemma_lower_bound}]
Consider the set of numbers $\MC{A}$.
Assume that we cut $\MC{A}$ at a point $z \in (a_k, a_{k+1}]$, for some $1\le k<n$.
The left-hand side of the inequality in our cutting lemma is $k\cdot(n-k)\cdot (n-1)$,
while the right-hand side is $\sum_{1\le i \le k}\sum_{k < j\le n}(j-i) = \frac{1}{2}kn(n-k)$, where the equality follows from Equation~(\ref{ob_rc}) derived in \S~\ref{section_linear_time_cut}.
%
Therefore we have $$\delta_0 \ge \frac{k(n-k)(n-1)}{\frac{1}{2}kn(n-k)} = 2\cdot\frac{n-1}{n},$$
which converges to $2$ as $n$ tends to infinity.
Since this is true for all $k$ with $1\le k<n$, this lemma follows.
\hfill$\Box$
\end{proof}

\begin{ap_corollary}{\ref{cor_dominating_metric_lower_bound}}
Let $\MC{M} = (V,d)$ be a given metric and $\MC{D}(\MC{M})$ be the set of dominating tree metrics of $\MC{M}$. Then $$\inf_{(V^\prime,d^\prime) \in \MC{D}(\MC{M})}{\frac{\sum_{u,v \in V}d^\prime(u,v)}{\sum_{u,v \in V}d(u,v)}} \ge 2.$$
\end{ap_corollary}

\begin{proof}[Proof of Corollary~\ref{cor_dominating_metric_lower_bound}]
This corollary follows directly from Lemma~\ref{lemma_ultrametric_lower_bound_partition}, Lemma~\ref{lemma_cutting_lemma_lower_bound}, and induction on the size of $\MC{A}$.
\hfill$\Box$
\end{proof}



\subsection{Computing the Optimal Cut in Linear Time} \label{section_linear_time_cut}

In this section, we show how the best cut can be computed efficiently in linear time.
Let $\{a_1, a_2, \ldots, a_n\}$, $a_1 \le a_2 \le \ldots \le a_n$, be the given set points.
%
For each $k$ with $1\le k < n$, 
let $LS(k) = \sum_{1\le i< k}\BIGP{a_k-a_i}$ and $RS(k) = \sum_{k<i\le n}\BIGP{a_i-a_k}$ be the sum of the distances between $a_k$ and the points
to the left of $a_k$ and the sum of distances between $a_k$ and the points to the right of $a_k$, respectively. The first observation is that, for $i\le i<n$,
\begin{equation}
\MC{RC}(i) = (n-i)\cdot LS(i) + i\cdot RS(i). \label{ob_rc}
\end{equation}
The following lemma shows how these quantities can be computed recursively.

\begin{lemma} \label{lemma_routing_cost_computation_formula}
For $1\le k< n-1$, We have 
\begin{itemize}
	\item{$LS(k+1) = LS(k) + \sum_{1\le i\le k}\ell_k$, and}
	\item{$RS(k+1) = RS(k) - \sum_{k<i\le n}\ell_k$.}
\end{itemize}
\end{lemma}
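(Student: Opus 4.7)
The plan is a direct algebraic verification from the definitions of $LS$ and $RS$, exploiting the telescoping structure of the gaps $\ell_i = a_{i+1} - a_i$. No induction, case analysis, or extra machinery is needed; the entire argument is bookkeeping. The only subtlety to keep in mind is the notation in the statement: the summand $\ell_k$ inside $\sum_{1 \le i \le k}\ell_k$ does not depend on the index $i$, so this expression simply denotes $k \cdot \ell_k$ (and analogously for the second identity, where $\sum_{k<i\le n}\ell_k = (n-k)\ell_k$).

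For the first identity, I would expand
\[
LS(k+1) \;=\; \sum_{1 \le i \le k}\bigl(a_{k+1} - a_i\bigr),
\]
isolate the term $i = k$, which contributes exactly $\ell_k$, and for each remaining term $i < k$ use the decomposition $a_{k+1} - a_i = (a_k - a_i) + \ell_k$. This rearranges to
\[
LS(k+1) \;=\; \ell_k \;+\; \sum_{1 \le i < k}\bigl((a_k - a_i) + \ell_k\bigr) \;=\; LS(k) \;+\; k\cdot\ell_k,
\]
which is the claimed recurrence once $k\ell_k$ is recognized as $\sum_{1\le i\le k}\ell_k$.

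For the second identity, I would instead expand $RS(k) = \sum_{k<i\le n}(a_i - a_k)$, split off the term $i = k+1$ (contributing $\ell_k$), and rewrite each remaining term via $a_i - a_k = (a_i - a_{k+1}) + \ell_k$. This gives
\[
RS(k) \;=\; \ell_k \;+\; \sum_{k+1 < i \le n}\bigl((a_i - a_{k+1}) + \ell_k\bigr) \;=\; RS(k+1) \;+\; (n-k)\cdot\ell_k,
\]
and solving for $RS(k+1)$ yields $RS(k+1) = RS(k) - (n-k)\ell_k = RS(k) - \sum_{k<i\le n}\ell_k$, as required.

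There is no genuine obstacle here. The only thing to be careful about is indexing: one must correctly account for the single term that is split off in each case (the $i = k$ term for $LS$ and the $i = k+1$ term for $RS$) so that the residual sum matches $LS(k)$ (respectively $RS(k+1)$) exactly. Once these two identities are in hand, combined with the starting values $LS(1) = 0$ and $RS(n) = 0$ and with Equation~\eqref{ob_rc}, the quantities $\MC{RC}(i)$ for all $1 \le i < n$ can be maintained in $O(1)$ time per update, giving the linear time computation of the optimal cut promised in this section.
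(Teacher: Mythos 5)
Your proof is correct and takes essentially the same route as the paper: a direct expansion of the definitions of $LS$ and $RS$ using $a_{k+1}=a_k+\ell_k$, with only cosmetic differences in how the single boundary term is split off. Your reading of $\sum_{1\le i\le k}\ell_k$ as $k\,\ell_k$ (and $\sum_{k<i\le n}\ell_k$ as $(n-k)\ell_k$) matches the paper's intent, and the resulting recurrences agree with the paper's.
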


\begin{proof}[Proof of Lemma~\ref{lemma_routing_cost_computation_formula}]
By definition, we have $LS(k+1) = \sum_{1\le i<k+1}\BIGP{\ell_k + a_k-a_i} = LS(k) + \sum_{1\le i\le k}\ell_k$,
and $RS(k+1) = \sum_{k+1<i\le n}\BIGP{a_i-a_k-\ell_k} = RS(k) - \sum_{k<i\le n}\ell_k$.
\hfill$\Box$
\end{proof}



By Lemma~\ref{lemma_routing_cost_computation_formula} and $(\ref{ob_rc})$,
we can compute in linear time the values $LS(k), RS(k)$, $\MC{RC}(k)$ for all $1\le k<n$, and the optimal cut. For any given interval $\MC{I} \subseteq [a_1,a_n]$, we can also compute the optimal cut inside $\MC{I}$ by the same approach.


\section{Approximating Euclidean Metrics by Their Spanning Trees}

\begin{figure*}[h]
\noindent\rule{\linewidth}{0.2mm}
\medskip
{{\sc Algorithm} {\em Euclidean-Spanning-Tree$\left(\MC{P}\right)$}} \newline
{Input: A set $\MC{P}$ of $n$ points in $\MC{R}^d$.} \newline
{Output: A pair $(\MC{T},r)$, which is a spanning tree $\MC{T}$ of $\MC{P}$ with root $r$.}

\begin{algorithmic}[1]
\IF{$\MC{P}$ is a singleton point set containing point $p$}
	\STATE Return $(\MC{P}, p)$.
\ENDIF
\STATE Let $\alpha = \frac{1}{4}$ be a constant.
\STATE Let 
$k$ be the index of dimension such that $\MC{L}_k(\MC{B}(\MC{P})) = \MC{L}_{max}(\MC{B}(\MC{P}))$.
\STATE Let $a_1 \le a_2 \le \ldots \le a_n$ be the coordinates of the projection of $\MC{P}$ into $k^{th}$ dimension, labelled in sorted order.
\STATE $p = \alpha\cdot(a_1+a_n)$, $q = (1-\alpha)\cdot(a_1+a_n)$.
\STATE $(\MC{P}_1, \MC{P}_2) \longleftarrow $ {\em 1d-cut}$\BIGP{\{a_1, a_2, \ldots, a_n\}, \BIGLR{[}{]}{p,q}}$.
\STATE $(T_1, r_1) \longleftarrow $ {\em Euclidean-Spanning-Tree}$(\MC{P}_1), (T_2, r_2) \longleftarrow $ {\em Euclidean-Spanning-Tree}$(\MC{P}_2)$.
\STATE Let $T \longleftarrow T_1 \cup T_2 \cup \{(r_1, r_2)\}$.
\STATE Return $\left(T, r_1\right)$.
\end{algorithmic}
\rule{\linewidth}{0.2mm} 
\caption{Algorithm for computing a spanning tree of low routing cost on Euclidean graphs.} \label{algorithm_spanning_tree_euclidean_graphs}
\end{figure*}

For convenience, let $\MC{F}$ be the collection of subsets of $\MC{P}$ which have occurred during the recursions. For any $\MC{Q} \in \MC{F}$, we denote by $\MC{T}[\MC{Q}]$ the subtree of $\MC{T}$ corresponding to $\MC{Q}$ and $e(\MC{Q})$ the edge connecting the two rooted subtrees corresponding to the two further partitions of $\MC{Q}$.
$e(\MC{Q})$ is defined to be a dummy self-loop with length zero if $\MC{Q}$ is a singleton set.
The following lemma provides an upper-bound on the pairwise distances.

\begin{lemma} \label{lemma_euclidean_radius_bound}
For any $p,q \in \MC{P}$, we have $d_\MC{T}(p,q) \le \frac{2}{\alpha}d\sqrt{d}\cdot\MC{L}_{max}(\MC{B}(\MC{P}))$.
\end{lemma}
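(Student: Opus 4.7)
The plan is to reduce the pairwise tree distance to two depths from a common root via the triangle inequality, and then bound each such depth by a geometric series of maximum bounding-box side lengths along a nested chain of subsets produced by the recursion. First I would observe that every edge $e(\MC{Q})$ inserted by the algorithm joins two points of $\MC{Q}$, namely the roots $r_1 \in \MC{Q}_1$ and $r_2 \in \MC{Q}_2$ of the two recursive subtrees; hence its Euclidean length is at most the diagonal of $\MC{B}(\MC{Q})$, that is, $\BIGC{e(\MC{Q})} \le \sqrt{\sum_{i=1}^d \MC{L}_i(\MC{B}(\MC{Q}))^2} \le \sqrt{d}\cdot \MC{L}_{max}(\MC{B}(\MC{Q}))$.

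Next I would establish the following shrinkage claim, which I expect to be the main obstacle of the proof: for any chain of consecutive subsets $\MC{Q}_0 \supsetneq \MC{Q}_1 \supsetneq \cdots \supsetneq \MC{Q}_d$ in the recursion, we have $\MC{L}_{max}(\MC{B}(\MC{Q}_d)) \le (1-\alpha)\cdot \MC{L}_{max}(\MC{B}(\MC{Q}_0))$. Since each cut lies in the central $(1-2\alpha)$ portion of the longest side, the cut dimension's length drops to at most $(1-\alpha)$ of its previous value. I would proceed by contradiction: let $M = \MC{L}_{max}(\MC{B}(\MC{Q}_0))$ and suppose some coordinate $j^*$ satisfies $\MC{L}_{j^*}(\MC{B}(\MC{Q}_d)) > (1-\alpha)M$. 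Since side lengths only decrease under cuts, $\MC{L}_{j^*}$ stays strictly above $(1-\alpha)M$ throughout the chain, so $j^*$ is never selected as the cut direction (cutting it would immediately bring it below the threshold). Moreover, any coordinate that is cut drops to at most $(1-\alpha)M$ and thus is strictly shorter than $\MC{L}_{j^*}$ at all subsequent steps, so it can never be chosen again. By pigeonhole, the $d$ cuts then require $d$ pairwise-distinct cut directions, all of them different from $j^*$; but only $d-1$ such directions exist, a contradiction.

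Combining the two facts, for any $u \in \MC{P}$ let $\MC{P} = \MC{Q}^{(0)} \supsetneq \MC{Q}^{(1)} \supsetneq \cdots \supsetneq \MC{Q}^{(K)} = \{u\}$ be the chain of recursive ancestors of $u$. The unique path in $\MC{T}$ from $u$ to the root $r(\MC{P})$ uses at most one edge from each level (precisely those levels at which $u$ lies on the non-root side of the split), so
$$d_\MC{T}(u, r(\MC{P})) \;\le\; \sum_{i=0}^{K-1}\BIGC{e(\MC{Q}^{(i)})} \;\le\; \sqrt{d}\cdot \sum_{i=0}^{K-1}\MC{L}_{max}(\MC{B}(\MC{Q}^{(i)})) \;\le\; \sqrt{d}\cdot d\cdot M\cdot\sum_{k=0}^{\infty}(1-\alpha)^k \;=\; \frac{d\sqrt{d}}{\alpha}\cdot\MC{L}_{max}(\MC{B}(\MC{P})),$$
where in the third inequality I group the chain into blocks of $d$ consecutive sets and apply the shrinkage claim to the blocks to obtain a geometric-series bound. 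The lemma then follows from the triangle inequality $d_\MC{T}(p,q) \le d_\MC{T}(p, r(\MC{P})) + d_\MC{T}(r(\MC{P}), q)$, which yields the claimed factor $\frac{2}{\alpha}\, d\sqrt{d}$.
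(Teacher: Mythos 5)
Your proof is correct and follows essentially the same route as the paper's: bound each connecting edge $e(\MC{Q})$ by the diagonal $\sqrt{d}\,\MC{L}_{max}(\MC{B}(\MC{Q}))$, sum along the two root-to-leaf ancestor chains of $p$ and $q$, and use the fact that $\MC{L}_{max}$ shrinks by a factor $(1-\alpha)$ after every $d$ consecutive cuts to obtain a geometric series and the bound $\frac{2}{\alpha}d\sqrt{d}\cdot\MC{L}_{max}(\MC{B}(\MC{P}))$. The only difference is that you spell out, via the pigeonhole argument on cut directions, the shrinkage claim that the paper merely asserts, which makes your write-up slightly more complete (and your accounting of the blocks is in fact a bit cleaner than the paper's).
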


\begin{proof}[Proof of Lemma~\ref{lemma_euclidean_radius_bound}]
Let $A_1 \supset A_2 \supset \ldots \supset A_a$, $A_i \in \MC{F}$ for $1\le i\le a$, be the subsets of $\MC{P}$ occurred during the recursions to which $p$ belongs,
and $B_1 \supset B_2 \supset \ldots \supset B_b$, $B_j \in \MC{F}$ for $1\le j\le b$, be the subsets to which $q$ belongs.
Note that $A_1 = B_1 = \MC{P}$, $A_a = \{p\}$, and $B_b = \{q\}$.
%
From the construction of $\MC{T}$, we have 
\begin{align*}
d_{\MC{T}}(p,q) & \le d_{\MC{T}[A_1]}(p,r_1) + \BIGC{e(\MC{P})} + d_{\MC{T}[B_1]}(r_2,q) \le \sum_{1\le i\le a}\BIGC{e(A_i)} + \BIGC{e(\MC{P})} + \sum_{1\le j\le b}\BIGC{e(B_j)},
\end{align*}
where $r_1$ and $r_2$ are the roots of $\MC{T}[A_1]$ and $\MC{T}[B_1]$.
Since the longest straight-line distance inside a hyper-rectangle is bounded by its longest diagonal,
we have $\BIGC{e(Q)} \le \sqrt{d}\MC{L}_{max}(\MC{B}(Q))$ for any subset $Q \in \MC{F}$.
Furthermore, since we always cut along the longest side of the bounding box,
we have $\MC{L}_{max}(\MC{B}(A_{i+d})) \le (1-\alpha)\MC{L}_{max}(\MC{B}(A_i))$ and $\MC{L}_{max}(\MC{B}(B_{j+d})) \le (1-\alpha)\MC{L}_{max}(\MC{B}(B_j))$
for all $1\le i \le a-d$ and $1\le j\le b-d$.
Therefore, it follows that 
\begin{align*}
d_\MC{T}(p,q) & \le \sum_{1\le i\le a}\sqrt{d}\MC{L}_{max}(\MC{B}(A_i)) + \sqrt{d}\MC{L}_{max}(\MC{B}(\MC{P})) + \sum_{1\le j\le b}\sqrt{d}\MC{L}_{max}(\MC{B}(B_j)) \\
& \le 2d\cdot\sum_{i \ge 1}\sqrt{d}(1-\alpha)^i\MC{L}_{max}(\MC{B}(\MC{P})) + \sqrt{d}\MC{L}_{max}(\MC{B}(\MC{P})) \\
& \le \frac{2}{\alpha}d\sqrt{d}\cdot\MC{L}_{max}(\MC{B}(\MC{P})),
\end{align*}
where in the second last inequality we collect every $d$ items from the summation of the first inequality and then combine them together into a geometric series.
\hfill$\Box$
\end{proof}


\begin{ap_lemma}{\ref{lemma_1_d_weighted_cutting}}
Given a set of real numbers $A = \left\{a_1, a_2, \ldots, a_n\right\}$, $a_1\le a_2\le \ldots \le a_n$
and an interval $\MC{I} = [\ell,r]$ such that $\MC{I} \subseteq [a_1,a_n]$, there exists a cutting point $z \in \MC{I}$ such that the following holds.
$$L_A(z) \cdot \BIGP{n-L_A(z)} \cdot \BIGC{\MC{I}} \le \delta_0\cdot \sum_{1\le i\le L_A(z)}\enskip \sum_{L_A(z)<j\le n}(a_j-a_i),$$
where $L_A(z) = \BIGC{\{a\in A: a < z\}}$ is the number of elements in A that are smaller than $z$
and $\delta_0 \le \frac{210}{59}$ is a constant.
\end{ap_lemma}

\begin{proof}[Proof of Lemma~\ref{lemma_1_d_weighted_cutting}]
We say that an interval degenerates if it has length zero.
First we argue that, if there are degenerating intervals at $a_1$, then it is always worse to cut at those degenerating intervals.
Let $k$, $1\le k \le n$, be the largest index such that $a_1 = a_2 = \ldots = a_k$.
Observe that, for any $i,j$ with $1\le i,j \le k$, we have $\MC{RC}(i) = \frac{i}{j} \cdot \MC{RC}(j)$. On the other hand, for $1\le i<k$ and $1\le j\le k-i$, we have
$$\frac{(i+j)(n-i-j)}{i(n-i)} = \frac{i(n-i)+j(n-2i-j)}{i(n-i)} \le \frac{i+j}{i} = \frac{\MC{RC}(i+j)}{\MC{RC}(i)},$$
which implies that $\frac{(i+j)(n-i-j)}{\MC{RC}(i+j)} \le \frac{i(n-i)}{\MC{RC}(i)}$ and therefore cutting at $(a_k,a_{k+1}]$ is always better than cutting at degenerating intervals at $a_1$.
Similarly, we can argue that, it is always worse to cut at the degenerating intervals at $a_n$, if there is any.

Now we argue that there will be a feasible cut satisfying the criterion.
According to the given interval $\MC{I} = [a,b]$ and the point set $A$, we create a new point set $B = \{b_1, b_2, \ldots, b_n\}$ as follows.
\vspace{-10pt}
\begin{align*}
\hspace{0.2\textwidth}\text{For $1\le i\le n$,} \quad b_i = \begin{cases}
\ell & \text{if $a_i < \ell$,} \\
a_i & \text{if $\ell \le a_i \le r$,} \\
r & \text{otherwise.}
\end{cases}
\end{align*}

Let $z$ be the best cut of $B$ in $\MC{I}$. By the above argument, we have $\ell < z < r$ and therefore $L_A(z)=L_B(z)$.
By
Lemma~\ref{lemma_1_d_cutting}, we have
$L_B(z) \cdot \BIGP{n-L_B(z)} \cdot \left|\MC{I}\right| \le \frac{210}{59} \sum_{b_i < z \le b_j}(b_j-b_i)$.
According to our setting, we have $(b_j-b_i) \le (a_j-a_i)$ for all $1\le i<j\le n$. Therefore $L_A(z) \cdot \BIGP{n-L_A(z)}\cdot \BIGC{\MC{I}} \le \frac{210}{59} \sum_{1\le i\le L_A(z)} \sum_{L_A(z) < j\le n}(a_j-a_i)$
as claimed.
\hfill$\Box$
\end{proof}


\begin{ap_theorem}{\ref{thm_euclidean_tree}}
Given a set of points $\MC{P}$ in $\MC{R}^d$, Algorithm \emph{Euclidean-Spanning-Tree} computes a spanning tree $\MC{T}$ of $\MC{P}$ such that the distance-weighted
average stretch of $\MC{T}$ with respect to $\MC{P}$ is at most $16\delta_0\cdot d\sqrt{d}$, where $\delta_0 \le \frac{210}{59}$ is the constant in our point set cutting lemma.
\end{ap_theorem}

\begin{proof}[Proof of Theorem~\ref{thm_euclidean_tree}]
If $\left|\MC{P}\right| = 1$, then this theorem holds trivially. Otherwise,
by Lemma~\ref{lemma_euclidean_radius_bound}, Lemma~\ref{lemma_1_d_weighted_cutting}, and the fact that the length of the restricted interval is $(1-2\alpha)\cdot\MC{L}_{max}(\MC{B}(\MC{P}))$, 
we have 
$$\MC{R}_\MC{T}(\MC{P}_1, \MC{P}_2) \le \BIGC{\MC{P}_1}\cdot\BIGC{\MC{P}_2}\cdot\frac{2}{\alpha}d\sqrt{d}\cdot\MC{L}_{max}(\MC{B}(\MC{P})) \le \frac{2\delta_0}{\alpha(1-2\alpha)}d\sqrt{d}\MC{R}(\MC{P}_1,\MC{P}_2).$$
This holds for all recursions. 
Choose $\alpha$ to be $\frac{1}{4}$ and this theorem follows directly by induction on the depth of recursion.
\hfill$\Box$
\end{proof}

\end{appendix}

\end{document}